\newcommand\reallywidehat[1]{%
\savestack{\tmpbox}{\stretchto{%
  \scaleto{%
    \scalerel*[\widthof{\ensuremath{#1}}]{\kern-.6pt\bigwedge\kern-.6pt}%
    {\rule[-\textheight/2]{1ex}{\textheight}}
  }{\textheight}%
}{0.5ex}}%
\stackon[1pt]{#1}{\tmpbox}%
}
\newtheorem{theorem}{Theorem}
\newtheorem{corollary}{Corollary}[theorem]
\newtheorem{lemma}{Lemma}
\newtheorem{definition}{Definition}
\DeclarePairedDelimiter{\ceil}{\lceil}{\rceil}
\DeclareMathOperator{\var}{var}
\DeclareMathOperator{\E}{\mathbb{E}}
\DeclareMathOperator{\I}{\mathbbm{1}}
\DeclareMathOperator{\similarity}{sim}
\newcommand\CMS{\mathrm{CMS}}
\begin{document}

\twocolumn[
\icmltitle{Sub-linear Memory Sketches for Near Neighbor Search on Streaming Data}

\begin{icmlauthorlist}
\icmlauthor{Benjamin Coleman}{riceECE}
\icmlauthor{Richard G Baraniuk}{riceECE,riceCS}
\icmlauthor{Anshumali Shrivastava}{riceECE,riceCS}
\end{icmlauthorlist}

\icmlaffiliation{riceCS}{Department of Computer Science, Rice University, Houston, Texas, USA}
\icmlaffiliation{riceECE}{Department of Electrical and Computer Engineering, Rice University, Houston, Texas, USA}

\vskip 0.3in
]

\icmlcorrespondingauthor{Benjamin Coleman}{ben.coleman@rice.edu}
\icmlcorrespondingauthor{Anshumali Shrivastava}{anshumali@rice.edu}
\printAffiliationsAndNotice{}

\begin{abstract}

We present the first sublinear memory sketch that can be queried to find the nearest neighbors in a dataset. Our online sketching algorithm compresses an N element dataset to a sketch of size $O(N^b \log^3 N)$ in $O(N^{(b+1)} \log^3 N)$ time, where $b < 1$. This sketch can correctly report the nearest neighbors of any query that satisfies a stability condition parameterized by $b$. We achieve sublinear memory performance on stable queries by combining recent advances in locality sensitive hash (LSH)-based estimators, online kernel density estimation, and compressed sensing. Our theoretical results shed new light on the memory-accuracy tradeoff for nearest neighbor search, and our sketch, which consists entirely of short integer arrays, has a variety of attractive features in practice. We evaluate the memory-recall tradeoff of our method on a friend recommendation task in the Google Plus social media network. We obtain orders of magnitude better compression than the random projection based alternative while retaining the ability to report the nearest neighbors of practical queries. 

\end{abstract}





\section{Introduction}
Approximate near-neighbor search (ANNS) is a fundamental problem with widespread applications in databases, learning, computer vision, and much more~\cite{gionis1999similarity}. Furthermore, ANNS is the first stage of several data processing and machine learning pipelines and is a popular baseline data analysis method. Informally, the problem is as follows. Given a dataset $\mathcal{D} = {x_1, \ x_2, ...,\  x_N}$, observed in a one pass sequence, build a data structure $\mathcal{S}$ that can efficiently identify a small number of data points $x_i \in \mathcal{D}$ that have high similarity to any dynamically generated query $q$. 

In this paper, we focus on low-memory ANNS in settings where it is prohibitive to store the complete data in any form. Such restrictions naturally arise in extremely large databases, computer networks, and internet-of-things systems~\cite{johnson2017billion}. We want to compress the dataset $\mathcal{D}$ into a sketch $\mathcal{S}$ that is as small as possible while still retaining the ability to find near-neighbors for any query. Furthermore, the algorithm should be one pass as the second pass is prohibitive when we cannot store the full data in any form. It is common wisdom that the size of $\mathcal{S}$ must scale linearly ($\geq \Omega(N)$), even if we allow algorithms that only identify the locations of the nearest neighbors. In this work, we challenge that wisdom by constructing a sketch of size $O(N^b \log^3{N})$ bits in $O(N^{b+1} \log^3{N})$ time. Our sketch can identify near-neighbors for {\em stable} queries with high probability in $O(N^{b+1} \log^3{N})$ time. The value of $b$ depends on the dataset, but $b$ can be significantly less than 1 for many applications of practical importance. It should be noted that our sketch does not return the near neighbors themselves, since we do not store the vectors in any form. Instead, we output the identity or the index of the nearest neighbor, which is sufficient for most applications and does not fundamentally change the problem. Our sketch also does not attempt to correctly answer every possible near-neighbor query in sublinear memory, as this would violate information theoretic lower bounds. Instead, we provide a graceful tradeoff between the stability of a near neighbor search query and the memory required to obtain a correct answer. 

\subsection{Our Contribution}

Our main contribution is a one-pass algorithm that produces a sketch $\mathcal{S}$ that solves the exact $v$-nearest neighbor problem in sub-linear memory with high probability. A formal problem statement is available in Section~\ref{sec:problemstatements} and our theoretical results are formally stated in Section~\ref{sec:theory}. Our algorithm requires $O(N^{b+1} \log^3{N})$ time to construct $\mathcal{S}$ and the same time to return the $v$ nearest neighbors for a query. Here, $b$ is a query-dependent value that describes the stability or difficulty of the query. Our guarantees are general and work for any query, but the sketch is only sub-linear when $b < 1$. In practice, one commits to a given $b$ value and obtains the guarantees for all queries satisfying our conditions. 

We obtain our sketch by merging compressed sensing techniques with recently-developed sketching algorithms. Surprisingly, we find that the hardness of a near-neighbor query is directly related to the notion of sparsity, or signal-to-noise ratio (SNR), in compressed sensing ~\cite{donoho2006compressed,tropp2007signal}. This connection allows us to analyze geometric structure in the dataset using the very well-studied compressed sensing framework. The idea of exploiting structure to improve theoretical guarantees has recently gained traction because it can lead to stronger guarantees. For instance, the first improvements over the seminal near-neighbor search results of \cite{indyk1998approximate} were obtained using \textit{data-dependent} hashing~\cite{andoni2014beyond}. These methods use information about the data distribution to generate an optimal hash for a given dataset. In this work, we assume that the dataset has a set of general properties that are common in practice and we construct a data structure that exploits these properties. In general, the communication complexity of the near neighbor problem is $O(N)$. Our method requires sub-linear memory because our data assumptions limit the set of valid queries. 

We support our theoretical findings with real experiments on large social-network datasets. Our theoretical techniques are sufficiently general to accommodate a variety of compressed sensing methods and KDE approximation algorithms. However, in practice we apply our theory using the Count-Min Sketch (CMS) as the compressed sensing method and the recently-proposed RACE sketch for KDE~\cite{coleman2020race}. Our RACE-CMS sketch inherits a variety of desirable practical properties from the RACE and CMS sketches that are used in its construction. When implemented this way, our near neighbor sketch consists entirely of a set of integer arrays. Furthermore, RACE sketches are linear, parallel and mergeable, allowing us to realize many practical gains using RACE-CMS. For instance, despite a query time complexity that is theoretically worse than linear search, RACE-CMS can be implemented in such a way that it is fast and practical to construct and query, processing thousands of vectors each second. As a result, we believe that our method will enable a variety of practical applications that need to perform near neighbor search in the distributed streaming setting with limited memory.

\section{Applications}
Here, we describe several applications for low-memory near neighbor sketches. 

\textbf{Graph Compression for Recommendation: } 
In recommendation systems, we represent relationships, such as friendship or co-purchases, as graphs. Given $N$ users, we represent each user as an $N$ dimensional sparse vector, where non-zero entries correspond to edges or connections. To perform recommendations, we often wish to find pairs of users that are mutually connected to a similar set of other users. The process of identifying these users is a similarity search problem over the $N$ dimensional sparse vector representation of the graph~\cite{hsu2006collaborative}. Online graphs can be very large, with billions of nodes and trillions of edges~\cite{ching2015one}. Since graphs at this scale are prohibitively expensive to store and transmit, methods capable of compressing the network into a small and informative sketch could be invaluable for large-scale recommendations. 

\textbf{Robust Caching: } The process of caching previously-seen data is a central component of many latency-critical applications including search engines, computer networks, web browsers and databases. While there are many well-established methods, such as Bloom filters, to detect exact matches, caching systems cannot currently report the distance between a query element and the contents of the cache. Our sketches can be used to implement caching mechanisms that are robust to minor perturbations in the query. Such a capability naturally provides better anomaly detection, robust estimation and retrieval. Since similar data structures can fit into the cache of modern processors~\cite{luo2018arrays}, our sketches could be an effective practical tool for online caching algorithms. 

\textbf{Distributed Data Streaming: } In application domains such as the internet-of-things (IoT) and computer networks, we often with to build classifiers and other machine learning systems in the streaming setting~\cite{ma2009identifying}. In practice, sketching is a critical component of distributed data collection pipelines. For instance, Apple uses a wide variety of sketches to enable mobile users to transmit valuable information that can be used to train machine learning models while minimizing the data transmission cost~\cite{apple2017}. Similar challenges occur with distributed databases and IoT settings, where data generators can be scattered across a network of connected devices. Such applications require sketching methods to minimize the data communication cost while preserving utility for downstream learning applications. Since our sketches consist of integer arrays, they can easily be serialized and sent over a network.

\subsection{Related Work}
\begin{table*}[t]
\caption{Summary of related work. Results are shown for a $d$-dimensional dataset of $N$ points. Existing methods \cite{johnson1984extensions,indyk2018approximate, agarwal2005geometric} can estimate distances to all points in the dataset with a $1 \pm \epsilon$ multiplicative error (full $\epsilon$ dependence not shown). Our method estimates the similarity with all points having a $\pm \epsilon$ additive error, where $b$ depends on the properties of the dataset.}
\begin{center}
\begin{tabular}{ l | l | l | l }
\hline
Method & Sketch Size (bits) & Sketch Time & Comments\\
\hline
\makecell[l]{No compression} & $d N \log N$ & N/A & - \\
\hline
\makecell[l]{Random projections~\cite{johnson1984extensions}} & $N \log^2{N}$ & $N \log{N}$ & \makecell[l]{Widely used in practice} \\
\hline
\makecell[l]{Compressed clustering tree~\cite{indyk2018approximate}} & $N \log{N}$ & $ N d \log^{O(1)}{N}$
& \makecell[l]{Multiple passes}\\
\hline
\makecell[l]{Coresets~\cite{agarwal2005geometric}} & $d \epsilon^{-(d-1)}$ & $N + \epsilon^{-(d-1)}$ & \makecell[l]{Multiple Passes} \\
\hline
\makecell[l]{This work} & $N^b \log^3{N}$ &$N^{b+1} \log^3{N}$ & \makecell[l]{$b < 1$ for stable queries} \\
\hline
\end{tabular}
\label{table:relatedwork}
\end{center}
\end{table*}

The problem of finding near-neighbors in {\em sub-linear time} is a very well-studied problem with several solutions~\cite{indyk1998approximate}. However, the {\em memory requirement} for near-neighbor search has only recently started receiving attention~\cite{indyk2018approximate,indyk2017near}. Although hueristic methods for sample compression are employed in practice, the best theoretical result in this direction requires $O(N \log{N})$ memory and therefore does not break the linear memory bound~\cite{indyk2018approximate}. Table~\ref{table:relatedwork} contains a summary of existing work in the area. To the best of our knowledge, the algorithm described in this paper is the first to perform near-neighbor search using asymptotically sub-linear memory. 

\paragraph{Coresets or Clustering Based Approaches:} A reasonable compression approach is to construct a coreset or represent the dataset as a set of clusters. For instance, the widely-used FAISS system compresses vectors using product quantization~\cite{jegou2010product}. There are also sampling procedures to construct a subset $P$ of $\mathcal{D}$ and guarantee the existence of a point $p \in P$ such that $d(p,x) < \epsilon$ for $\epsilon > 0$. The cluster-based approach from ~\cite{har2014down} uses similar ideas to reduce the space for $v$-nearest-neighbor by a constant factor of $\frac{1}{v}$. However, our procedure is superior in the following two regards. First, coresets and sample-based compression methods require parallel access to the entire dataset at once to determine which points to retain in the sketch. As an example, the sketch in~\cite{har2014down} requires an offline clustering step. Therefore, it is impossible to stream queries to the sketch efficiently using existing methods. Second, cluster approximations of the data cannot solve the exact $v$-nearest neighbor problem because the sketching process removes points from the dataset. Despite the guarantees that can be obtained using $\epsilon$ coverings of the dataset, there may be any number of near-neighbors within $\epsilon$ of the query that have been discarded during sketching. 


Perhaps most importantly, our method requires weaker assumptions about the dataset. Cluster-based methods assume that the dataset has a clustered structure that can be approximated by a small collection of centroids. To achieve high compression ratios, coreset methods require similar assumptions. However, our method is valid even when there is no efficient cluster representation. Our weak assumptions are particularly applicable to recent problems in recommendation systems, graph compression and neural embedding models. In this context, we are given a dataset where each embedding or object representation is close to a relatively small number of other elements. Furthermore, we expect most of our queries to be issued in regions that contain only a few elements from the dataset. Although there may be no large-scale hierchical clustering structure, our method can exploit the weaker structure in the dataset to provide good compression without the need for complex clustering and sample compression algorithms. 

Finally, we note that our approach is much simpler to understand and analyze than existing methods. While clustering methods can achieve good performance, they usually require complex distance-approximation methods at query time. Sketch construction consists of computationally-intensive clustering steps or coreset sampling routines that have many moving parts. In contrast, our data structure is a simple array of integer counters with a fixed size. Therefore, we expect that our method will be attractive to practitioners and system designers.





\subsection{Background}

Our algorithm uses recent advances in locality-sensitive hashing (LSH)-based sketching with standard compressed sensing techniques. Before covering our method in detail and presenting theoretical results, we briefly review some useful results in sketching and compressed sensing. 

\subsection{Problem Statement}
\label{sec:problemstatements}
In this paper, we solve the exact $v$-nearest neighbor problem. The $v$-nearest neighbor problem is to identify all of the $v$ closest points to a query with high probability. The difficulty of the $v$-nearest neighbor problem is data-dependent. To capture the difficulty of a query, we use the notion of near-neighbor stability from the seminal paper~\cite{beyer1999nearest}. 

\begin{definition} Exact $v$-nearest neighbor \\
\label{def:vNN}
Given a set $\mathcal{D}$ of points in a $d$-dimensional space and a parameter $v$, construct a data structure which, given any query point $q$, reports a set of $v$ points in $\mathcal{D}$ with the following property: Each of the $v$ nearest neighbors to $q$ is in the set with probability $1-\delta$.
\end{definition}

\begin{definition} Unstable near-neighbor search\\
\label{def:unstableNN}
A nearest neighbor query is unstable for a given $\epsilon$ if the distance from the query point to most data points is $\le (1 + \epsilon)$ times the distance from the query point to its nearest neighbor.
\label{def:unstableneighbor}
\end{definition}

\subsection{Compressed Sensing and the Count Min Sketch}
\label{sec:compressedsensing}

Compressed sensing is the area in signal processing that deals with the recovery of compressible signals from a sublinear number of measurements. The task is to recover an $N$-length vector $\mathbf{x}$ from a vector $\mathbf{y}$ of $M$ linear combinations, or measurements, of the $N$ components of $\mathbf{x}$. The problem is tractable when $\mathbf{x}$ is $v$-sparse and has only $v$ nonzero elements. For a more detailed description of the compressed sensing problem, see~\cite{baraniuk2007compressive}. The fundamental result in compressed sensing is that we can exactly recover $\mathbf{x}$ from $\mathbf{y}$ using only $M = O(v \log{N/v})$ measurements. 

In the streaming literature, the $v$ nonzero elements are often called \textit{heavy hitters}. The Count-Min Sketch (CMS) is a classical data summary to identify heavy hitters in a data stream. The CMS is a $d\times w$ array of counts that are indexed and incremented in a randomized fashion. Given a vector $\mathbf{s}$, for every element $s_i$ in $\mathbf{s}$, we apply $d$ universal hash functions $h_1(\cdot), ... h_d(\cdot)$ to $i$ to obtain a set of $d$ indices. Then, we increment the CMS cells at these indices. When all elements of $\mathbf{s}$ are non-negative, we have a point-wise bound on the estimated $s_i$ values returned by the CMS~\cite{cormode2005improved}. For the sake of simplicity, we only consider the CMS when presenting our results. Finding heavy hitters is equivalent to compressed sensing~\cite{indyk2013sketching}, and there are an enormous number of valid measurement matrices in the literature~\cite{candes2011probabilistic}. Other compressed sensing methods can improve our bounds, but we defer this discussion the supplementary materials.



\begin{theorem}
\label{thm:CMSPointwiseGuarantee}
Given a CMS sketch of the non-negative vector $\mathbf{s} \in \mathbb{R}^{N}_+$ with $d = O\left(\log\left(\frac{N}{\delta}\right)\right)$ rows and $w = O\left(\frac{1}{\epsilon}\right)$ columns, we can recover a vector $\mathbf{s}^{\CMS}$ such that we have the following point-wise recovery guarantee with probability $1-\delta$ for each recovered element $s^{\CMS}_{i}$: 
\begin{equation}
s_i \leq s^{\text{CMS}}_{i}\leq s_i + \epsilon |\mathbf{s}|_{1}
\end{equation}
\end{theorem}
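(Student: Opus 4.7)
The plan is to follow the standard pointwise analysis of the Count-Min Sketch. I would decompose the claim into the lower bound $s_i \le s_i^{\mathrm{CMS}}$ and the upper bound $s_i^{\mathrm{CMS}} \le s_i + \epsilon|\mathbf{s}|_1$, and handle them separately.

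For the lower bound, I would recall that the CMS estimate of $s_i$ is $s_i^{\mathrm{CMS}} = \min_{r \in [d]} C[r, h_r(i)]$, where $C[r, h_r(i)]$ is the counter that $i$ hashed to in row $r$. Because $\mathbf{s} \in \mathbb{R}_+^N$, every update to that counter adds a non-negative quantity, so $C[r, h_r(i)] = s_i + \sum_{j \ne i,\, h_r(j) = h_r(i)} s_j \ge s_i$. Taking the min over $r$ preserves the inequality, which gives the left half of the guarantee deterministically.

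For the upper bound, fix an index $i$ and a row $r$, and let $E_{r,i} = \sum_{j \ne i,\, h_r(j) = h_r(i)} s_j$ denote the error contribution from collisions. Using pairwise (universal) independence of $h_r$, each $j \ne i$ collides with $i$ with probability at most $1/w$, so $\E[E_{r,i}] \le (|\mathbf{s}|_1 - s_i)/w \le |\mathbf{s}|_1/w$. Choosing $w = \lceil 2/\epsilon \rceil$ and applying Markov's inequality yields $\Pr[E_{r,i} \ge \epsilon |\mathbf{s}|_1] \le 1/2$. Since the $d$ hash functions $h_1, \ldots, h_d$ are independent, the collision errors across rows are independent, so the probability that the minimum counter still exceeds $s_i + \epsilon|\mathbf{s}|_1$ is at most $(1/2)^d$. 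Setting $d = O(\log(N/\delta))$ drives this failure probability below $\delta/N$, and a union bound over all $N$ coordinates gives simultaneous correctness for every recovered $s_i^{\mathrm{CMS}}$ with probability at least $1-\delta$.

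There is no serious obstacle here; the only points that require a bit of care are the justification that $\Pr[h_r(j) = h_r(i)] \le 1/w$ (which needs the universal/pairwise independent hash family assumption) and the choice of constants in $w$ and $d$ so that the Markov bound followed by the amplification over $d$ rows, and then the union bound over $N$ elements, all combine to yield the stated big-$O$ expressions. The proof is essentially the classical Cormode--Muthukrishnan argument, and no new ingredients are needed beyond the non-negativity of $\mathbf{s}$ and the independence of the rows.
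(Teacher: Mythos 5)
Your proof is correct: the paper does not prove this theorem itself but cites it to Cormode and Muthukrishnan, and your argument is exactly the classical one from that source (deterministic lower bound from non-negativity, Markov plus independence across the $d$ rows for the upper bound, union bound over the $N$ coordinates to justify the $\log(N/\delta)$ in $d$). Nothing is missing.
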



\subsection{Locality-Sensitive Hashing}
\label{sec:LSH}
LSH~\cite{indyk1998approximate} is a popular technique for efficient approximate nearest-neighbor search. An LSH family is a family of functions with the following property: Under the hash mapping, similar points have a high probability of having the same hash value. We say that a collision occurs whenever the hash values for two points are equal, i.e. $h(p) = h(q)$. The probability $\text{Pr}_{\mathcal{H}}[h(p) = h(q)]$ is known as the collision probability of $p$ and $q$. In this paper we will use the notation $p(p,q)$ to denote the collision probability of $p$ and $q$. For our arguments, we will assume a slightly stronger notion of LSH than the one given by~\cite{indyk1998approximate}. We will suppose that the collision probability is a monotonic function of the similarity between $p$ and $q$. That is
\begin{equation}
    p(p,q) \propto f(\similarity(p,q))
\end{equation}
where $\similarity(p,q)$ is a similarity function and $f(\cdot)$ is monotone increasing. LSH is a very well-studied topic with a number of well-known LSH families in the literature~\cite{gionis1999similarity}. Most LSH families satisfy this assumption. 



\subsection{Repeated Array-of-Counts Estimator (RACE)}
\label{sec:ACE}
Recent work has shown that LSH can be used for efficient unbiased statistical estimation~\cite{spring2017new,charikar2017hashing,luo2018arrays}. The RACE algorithm~\cite{coleman2020race} replaces the universal hash function in the CMS with an LSH function. The result is a sketch that approximates the kernel density estimate (KDE) of a query. Here, we re-state the main theorem from~\cite{luo2018arrays} using simpler notation. 

\begin{theorem} ACE Estimator~\cite{luo2018arrays}\\
\label{thm:ace}
Given a dataset $\mathcal{D}$, an LSH function $l(\cdot) \mapsto [1,R]$ and a parameter $K$, construct an LSH function $h(\cdot) \mapsto [1,R^K]$ by concatenating $K$ independent $l(\cdot)$ hashes. Let $A \in \mathbb{R}^{R^K}$ be an array of $O(R^K \log N)$ bits where the $i^{\text{th}}$ component is
$$ A[i] = \sum_{x \in \mathcal{D}} \I_{\{h(x) = i\}}$$
Then for any query $q$, 
$$ \E[A[h(q)]] = \sum_{x \in \mathcal{D}} p(x,q)^K$$
\end{theorem}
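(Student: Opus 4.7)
The plan is to compute $\mathbb{E}[A[h(q)]]$ directly from the definition of $A$, then factor the collision probability by exploiting the independence of the $K$ base hashes. Since $q$ and $\mathcal{D}$ are fixed and the randomness lives entirely in the choice of $h$, this reduces to a short calculation.

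First, I would substitute $i = h(q)$ into the definition $A[i] = \sum_{x \in \mathcal{D}} \mathbbm{1}\{h(x) = i\}$ to write $A[h(q)] = \sum_{x \in \mathcal{D}} \mathbbm{1}\{h(x) = h(q)\}$. Note that $h(q)$ is itself random, but we are evaluating $A$ at whatever bucket $q$ lands in, so the indicator is simply the event that $x$ and $q$ collide under $h$. Linearity of expectation then gives
\begin{equation*}
\mathbb{E}[A[h(q)]] \;=\; \sum_{x \in \mathcal{D}} \Pr[h(x) = h(q)].
\end{equation*}

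Second, I would use the construction of $h$ as a concatenation of $K$ independent base hashes $l_1, \ldots, l_K$: the compound hashes agree if and only if each of the $K$ component hashes agrees, so
\begin{equation*}
\Pr[h(x) = h(q)] \;=\; \prod_{k=1}^{K} \Pr[l_k(x) = l_k(q)] \;=\; p(x,q)^K,
\end{equation*}
where the first equality uses independence of the $l_k$ and the second uses that each $l_k$ is distributed identically to $l$, giving collision probability $p(x,q)$ as defined in Section~\ref{sec:LSH}. Summing over $x \in \mathcal{D}$ yields the claimed identity.

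There is no real obstacle here — the statement is a direct consequence of linearity of expectation plus independence of the component LSH functions. The only bookkeeping points worth flagging are that (i) the expectation is taken solely over the randomness in $h$, with $\mathcal{D}$ and $q$ held fixed, and (ii) the $O(R^K \log N)$-bit budget in the theorem suffices because each counter $A[i]$ is bounded by $N$, so $\lceil \log_2 N \rceil$ bits per cell is enough to store the true counts without overflow affecting the expectation computation above.
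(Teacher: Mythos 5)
Your proof is correct and uses the same argument the paper relies on: the paper does not prove Theorem~\ref{thm:ace} directly (it is imported from prior work), but its supplementary proof of the generalization to arbitrary linear combinations (Theorem~\ref{thm:ACELinearCombos}) is exactly your calculation --- write $A[h(q)]$ as a sum of collision indicators, apply linearity of expectation, and identify $\E[\I_{\{h(x)=h(q)\}}]$ with the collision probability $p(x,q)^K$ of the concatenated hash. Your explicit factorization of the collision probability into a product over the $K$ independent component hashes is a detail the paper leaves implicit, but it is the standard LSH amplification step and is stated correctly.
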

We will heavily leverage the observation that $A[h(q)]$ is an unbiased estimator of the summation of collision probabilities. This sum is a kernel density estimate over the dataset~\cite{coleman2020race}, where the kernel is defined by the LSH function.

\begin{algorithm}[H] 
\caption{One-Pass Online Sketching Algorithm}
\label{alg:sketch}
\begin{algorithmic}
\REQUIRE $\mathcal{D}$
\ENSURE $d \times w$ RACE arrays indexed as $A_{i,j,o}$
 \STATE \textbf{Initialize:} $k \times d \times w \times R$ independent LSH family (denoted by $L(\cdot)$) and $d$ independent 2-universal hash functions $h_i(\cdot)$, $i \in [1-d]$, each taking values in range $[1-w]$.
 \WHILE{not at end of data $\mathcal{D}$}
 \STATE read current $x_j$;
  \FOR{$o$ in 1 to $r$} 
  \FOR{$i$ in 1 to $d$}
   \STATE $A_{i,h_i(j),o}[L[x_j]]$++;
  \ENDFOR
  \ENDFOR
 \ENDWHILE
\end{algorithmic}
\end{algorithm}

\begin{algorithm}[H] 
\caption{Querying Algorithm}
\label{alg:query}
\begin{algorithmic}
\REQUIRE Sketch from Algorithm~\ref{alg:sketch}, query $q$
\ENSURE Identities of Top-$v$ neighbors of $q$
 \STATE \textbf{We already have:} $k \times d \times w \times R$ independent LSH family (denoted by $L(\cdot)$) and $d$ independent 2-universal hash functions $h_i(\cdot)$, $i \in [1-d]$, each taking values in range $[1-w]$ from Algorithm~\ref{alg:sketch}.
 \FOR{$i$ in 1 to $d$}
 \FOR{$j$ in 1 to $w$}
   \STATE $\reallywidehat{\CMS}_{(i,j)} = \text{MoM}(A_{i,j,o}[L(q)])$
 \ENDFOR
 \ENDFOR
 
 \FOR{$j$ in 1 to $n$}
   \STATE $\reallywidehat{\mathbf{s}}_j = \reallywidehat{p}(q,x_j)^K = \min_{i}\  \reallywidehat{\CMS}_{(i,h_i(j))}$
 \ENDFOR
 
 \STATE Report top-$v$ indices of $\reallywidehat{\mathbf{s}}$ as the neighbors. 
 
\end{algorithmic}
\end{algorithm}

\section{Intuition}
We propose Algorithm~\ref{alg:sketch} as an online near-neighbor sketching method and Algorithm~\ref{alg:query} to query the sketch. The intuition behind our algorithm is as follows. Consider the naive method to perform near-neighbor search. We begin by finding the pairwise distances between the query and each point in the dataset. Given a query $q$, this procedure results in a vector of $N$ distances, where the $i^{\text{th}}$ position in the vector contains the distance $d(x_i,q)$. If $j$ is the index of the smallest element in the vector, then $x_j$ is the nearest neighbor to the query. Now suppose that we are given a vector $\mathbf{s}$ of $N$ kernel evaluations rather than explicit distances. Here, the $i^{\text{th}}$ component of $\mathbf{s}$ is $s_i = k(x_i,q)$, where $k(\cdot,\cdot)$ is a radial kernel. Radial kernels are nearly 1 when $d(x_i,q)$ is small and decrease to 0 as $d(x_i,q)$ increases. Since $k(x_i,q)$ is a monotone decreasing function with respect to $d(x_i,q)$, the vector of kernel values is also sufficient to perform near neighbor search. If $s_j$ is the largest component of $\mathbf{s}$, then $x_j$ is the nearest neighbor to the query. The main idea of our algorithm is to apply compressed sensing techniques to $\mathbf{s}$.

The main result from compressed sensing is that a sparse vector $\mathbf{s}$ can be recovered from a sub-linear memory sketch of its components. If we assume that $\mathbf{s}$ is $v$-sparse (contains only $v$ elements that are large), then we can recover $\mathbf{s}$ from $O(v \log N/v)$ random linear combinations of the entries of $\mathbf{s}$. The key insight is that each measurement is a weighted kernel density estimate (KDE) over the dataset. Using a small collection of KDE sums, we can identify the near neighbors of the query. If we choose the coefficients to be $\{1,0\}$, then each measurement is an unweighted KDE over a partition of the dataset. While it requires $N$ memory to compute the exact KDE, recent results~\cite{coleman2020race} show that the KDE may be approximated by an online sketch in space that is constant with respect to $N$. While larger sketches improve the quality of the approximation, the memory does not grow when elements are added to the dataset. Thus, each of the $O(v \log N/v)$ measurements can be approximated using constant memory in the streaming setting.

\section{Theory}
\label{sec:theory}
Due to space constraints, we omit proofs and corner cases. For a thorough presentation that includes proofs, see the supplementary material. 

\subsection{Estimation of Compressed Sensing Measurements}
\label{sec:estimateCSmeasurements}
To bound the error of the approximation for our compressed sensing measurements, we bound the variance of the RACE estimator using standard inequalities. 



\begin{theorem}
\label{thm:ACELinearCombos}
Given a dataset $\mathcal{D}$, $K$ independent LSH functions $l(\cdot)$ and any choice of constants $r_i \in \mathbb{R}$, RACE can estimate a linear combination of $\mathbf{s}_i(q) = p(x_i,q)^K$ with the following variance bound. 
\begin{equation}
         \E[A[L(q)]] = \sum_{x_i \in \mathcal{D}} r_i p(x_i,q)^K
\end{equation}
\begin{equation}
\var(A[l(q)]) \leq |\tilde{\mathbf{s}}(q)|^2_1\\
\end{equation}
where $L(\cdot)$ is formed by concatenating the $K$ copies of $l(\cdot)$ and $\tilde{\mathbf{s}}_i(q) = \sqrt{\mathbf{s}_i(q)}$. 
\end{theorem}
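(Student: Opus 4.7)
The plan is to treat the weighted sketch as a sum of dependent indicators and to use a Cauchy–Schwarz bound to absorb the dependence before collapsing everything into an $\ell_1$ quantity.

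First I would write down the estimator explicitly. The weighted RACE sketch is $A[j]=\sum_{i}r_i\,\mathbb{1}_{\{L(x_i)=j\}}$, so evaluating at the query's hash gives
\begin{equation}
A[L(q)] \;=\; \sum_{i=1}^{N} r_i\, Y_i, \qquad Y_i:=\mathbb{1}_{\{L(x_i)=L(q)\}}.
\end{equation}
The mean is immediate by linearity: since $L$ is a concatenation of $K$ independent copies of $l$, the collision probability is $\Pr[L(x_i)=L(q)] = p(x_i,q)^{K}$, and therefore $\E[A[L(q)]] = \sum_i r_i\, p(x_i,q)^{K}$, which matches the stated mean.

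For the variance I would use the trivial inequality $\var(X)\le \E[X^{2}]$ to avoid having to subtract off the mean, and then expand
\begin{equation}
\E\bigl[A[L(q)]^{2}\bigr] \;=\; \sum_{i,j} r_i r_j\, \E[Y_i Y_j].
\end{equation}
The key step is bounding $\E[Y_iY_j]$ without requiring independence of the indicators (they are correlated through the shared $L(q)$). Because each $Y_i$ is $\{0,1\}$-valued, Cauchy–Schwarz gives
\begin{equation}
\E[Y_i Y_j] \;\le\; \sqrt{\E[Y_i^{2}]\,\E[Y_j^{2}]} \;=\; \sqrt{\mathbf{s}_i(q)\,\mathbf{s}_j(q)} \;=\; \tilde{\mathbf{s}}_i(q)\,\tilde{\mathbf{s}}_j(q).
\end{equation}

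Substituting back, $\var(A[L(q)])\le \sum_{i,j} r_i r_j\, \tilde{\mathbf{s}}_i \tilde{\mathbf{s}}_j$. Under the natural assumption $|r_i|\le 1$ (which is satisfied in the intended CMS-style application, where $r_i\in\{0,1\}$ is the indicator that index $i$ lands in a given bucket), the right-hand side is at most $\bigl(\sum_i \tilde{\mathbf{s}}_i\bigr)^{2} = \lvert\tilde{\mathbf{s}}(q)\rvert_{1}^{2}$, which is the advertised bound.

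The main obstacle is the dependence step: the indicators $Y_i$ and $Y_j$ are not independent because they share $L(q)$ and because the LSH function is a random mapping, so one cannot simply factor the joint probability. Cauchy–Schwarz sidesteps the issue entirely by replacing the joint probability with the geometric mean of the marginals, which is exactly what motivates introducing $\tilde{\mathbf{s}}(q)=\sqrt{\mathbf{s}(q)}$ coordinate-wise. Everything else is bookkeeping: linearity for the mean, $\var\le \E[X^{2}]$ to handle arbitrary signs of $r_i$, and factoring the double sum as $\bigl(\sum_i |r_i|\tilde{\mathbf{s}}_i\bigr)^{2}$ before majorizing by $\lvert\tilde{\mathbf{s}}\rvert_{1}^{2}$.
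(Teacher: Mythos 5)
Your proposal is correct and follows essentially the same route as the paper's proof: linearity of expectation for the mean, the second-moment bound $\var(Z)\le\E[Z^2]$, Cauchy--Schwarz on $\E[\I_i\I_j]$ to replace the joint collision probability with $\tilde{\mathbf{s}}_i\tilde{\mathbf{s}}_j$, and the normalization $|r_i|\le 1$ to collapse the double sum to $\lvert\tilde{\mathbf{s}}(q)\rvert_1^2$. Your write-up is in fact slightly more careful than the paper's in making explicit that $\E[Y_i^2]=\E[Y_i]$ for indicators and in factoring the double sum as a square before majorizing.
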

Let $\mathbf{y}\in \mathbb{R}^M$ be the $M$ compressed sensing measurements of the KDE vector $\mathbf{s}(q)$. A direct corollary of Theorem~\ref{thm:ACELinearCombos} is that by setting the coefficients correctly, we can obtain unbiased estimators of each measurement with bounded variance. Using the median-of-means (MoM) technique, we can obtain an arbitrarily close estimate of each compressed sensing measurement. To ensure that all $M$ measurements obey this bound with probability $1-\delta$, we also apply the probability union bound. Note that the multiplicative $M$ factor comes from the fact that we are using ACE to estimate $M$ different measurements. 

\begin{theorem}
\label{thm:AllACEErrorBound}
Given any $\epsilon > 0$ and
$ O\Big(M \frac{|\tilde{\mathbf{s}}(q)|^2_1}{\epsilon^2}\log\Big(\frac{M}{\delta}\Big)\Big)$
independent ACE repetitions, for any query $q$, we have the following bound for each of the $M$ measurements with probability $1-\delta$
\begin{equation}
y_i(q) - \epsilon \leq \hat{y}_i(q) \leq y_i(q) + \epsilon\\
\end{equation}
\end{theorem}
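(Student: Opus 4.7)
The plan is to combine Theorem~\ref{thm:ACELinearCombos} with a standard median-of-means (MoM) amplification and a union bound over the $M$ compressed sensing measurements. For a single fixed measurement $y_i(q)$, Theorem~\ref{thm:ACELinearCombos} already gives us an unbiased estimator $A[L(q)]$ whose variance is at most $|\tilde{\mathbf{s}}(q)|_1^2$. The only two things we still need to squeeze out are (i) the $\epsilon$-level additive accuracy and (ii) exponentially small failure probability, both of which MoM is designed to deliver.

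First I would build the per-measurement estimator. Take $n$ independent ACE repetitions for measurement $i$, average them to form a batch mean $\bar{y}_i$, and note that $\bar{y}_i$ is unbiased with $\var(\bar{y}_i) \leq |\tilde{\mathbf{s}}(q)|_1^2 / n$. A direct application of Chebyshev's inequality gives
\begin{equation}
\Pr\bigl[|\bar{y}_i - y_i(q)| > \epsilon \bigr] \leq \frac{|\tilde{\mathbf{s}}(q)|_1^2}{n \epsilon^2},
\end{equation}
so choosing $n = \Theta(|\tilde{\mathbf{s}}(q)|_1^2 / \epsilon^2)$ with a sufficiently large hidden constant makes each batch mean $\epsilon$-accurate with probability at least, say, $3/4$.

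Next I would amplify this constant-probability guarantee to $1 - \delta/M$ by taking the median of $k$ independent batch means. Let $Z_j$ be the indicator that batch $j$ produces an $\epsilon$-accurate estimate; the $Z_j$'s are independent Bernoulli variables with success probability $\geq 3/4$, so a Chernoff (or Hoeffding) bound on $\sum_j Z_j$ shows that the median is $\epsilon$-accurate unless fewer than half of the batches succeed, an event of probability at most $e^{-\Omega(k)}$. Setting $k = \Theta(\log(M/\delta))$ forces this failure probability below $\delta/M$. A union bound over the $M$ measurements then yields the joint guarantee with failure probability at most $\delta$. The total number of ACE repetitions summed over all measurements is $M \cdot n \cdot k = O\bigl(M \, |\tilde{\mathbf{s}}(q)|_1^2 \, \epsilon^{-2} \log(M/\delta)\bigr)$, matching the theorem statement.

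I do not expect a real obstacle here; the argument is essentially bookkeeping on top of Theorem~\ref{thm:ACELinearCombos}. The one subtle point worth flagging is that the variance bound from Theorem~\ref{thm:ACELinearCombos} is in terms of $|\tilde{\mathbf{s}}(q)|_1^2$ rather than the more naive $|\mathbf{s}(q)|_1$, so I would double-check that the same bound applies uniformly for every one of the $M$ coefficient patterns $\{r_i\}$ used to define the compressed sensing measurements; if any coefficient pattern inflates the variance by a constant, that constant gets absorbed into the $O(\cdot)$, but it should be verified rather than assumed.
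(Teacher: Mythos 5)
Your proposal is correct and follows essentially the same route as the paper: a per-measurement median-of-means estimate built on the variance bound of Theorem~\ref{thm:ACELinearCombos}, amplified to failure probability $\delta/M$ and combined with a union bound over the $M$ measurements. The only cosmetic difference is that you unpack the standard MoM concentration inequality (Chebyshev on batch means plus a Chernoff bound on the median) while the paper invokes it as a black box in Lemma~\ref{lem:OneACEErrorBound}.
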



Therefore, by repeating ACE estimators (RACE), we can obtain low-variance estimates of the compressed sensing measurements of $\mathbf{s}(q)$. The exact number of measurements $M$ depends on both $\Phi$ and the dataset, but $M < O(N)$. 

\subsection{Query-Dependent Sparsity Conditions}
\label{sec:sparsityconditions}
For our compressed sensing measurements to be useful, $\mathbf{s}(q)$ needs to be \textit{sparse} with a bound on $|\mathbf{s}(q)|_1$\cite{donoho2006compressed}. We also require a bound on $|\tilde{\mathbf{s}}(q)|_1$ to avoid a memory blow-up in Theorem~\ref{thm:AllACEErrorBound}. If we simply assume a bound on $|\tilde{\mathbf{s}}(q)|_1$, it is straightforward to show that the sketch requires sub-linear memory. See the supplementary materials for details. To characterize the type of queries that are appropriate for our algorithm, we connect sparsity with the idea of near-neighbor stability~\cite{beyer1999nearest}, a well-established notion of query difficulty.

Given any vector $\mathbf{s}(q)$ with elements between 0 and 1, we can tune $K$ to make $\mathbf{s}(q)$ sparse and obtain the required bounds. However, increasing $K$ also increases the memory because we require increasingly more precise estimates to differentiate between $\mathbf{s}_{v}$ and $\mathbf{s}_{v+1}$. Therefore, we want $K$ to be just large enough. The largest value of $K$ is required when all points in the dataset other than the $v$ nearest neighbors are equidistant to the query ($|\mathbf{s}|_1 = O(N)$). To choose $K$ appropriately, we begin by defining two data-dependent values $\Delta$ and $B$ to characterize this situation. Suppose that $x_v$ and $x_{v+1}$ are the $v^{\text{th}}$ and $(v+1)^{\text{th}}$ nearest neighbors, respectively. Let $\Delta$ be defined as $\Delta = \frac{p(x_{v+1},q)}{p(x_v,q)}$ and $B = \sum_{i = v+1}^N \frac{\tilde{s}_i}{\tilde{s}_{v+1}}$. $\Delta$ measures the stability (Definition~\ref{def:unstableNN}) of the query and is a measure of the gap between the near-neighbors and the rest of the dataset. If $\Delta \approx 1$, then $x_v$ and $x_{v+1}$ are very difficult to separate and the query is \textit{unstable}. $B$ measures the sparsity of $\mathbf{s}$. If $B$ is $O(N)$, then every element of $\mathbf{s}$ is nonzero (Figure~\ref{fig:sparsity}). We are now ready to present our results for $K$ in terms of $B$ and $\Delta$. 
\begin{theorem}
\label{thm:ChooseKtoBoundS}
Given a query $q$ and query-dependent parameters $B$ and $\Delta$, if 
$K = \ceil[\Big]{2\frac{\log B}{\log \frac{1}{\Delta}}}$
then
$ p(x_v,q)^K \geq \sum_{i=v+1}^{N} p(x_i,q)^K$
and we have the bounds 
$ |\mathbf{s}(q)|_1 \leq v+1$ and $ |\tilde{\mathbf{s}}(q)|_1 \leq v+1$
\end{theorem}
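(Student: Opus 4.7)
The plan is to establish the tail bound $\sum_{i=v+1}^{N} p(x_i,q)^K \leq p(x_v,q)^K$ first, and then read off both $\ell_1$ bounds as easy corollaries. The key observation driving the argument is that each ratio $p(x_i,q)/p(x_{v+1},q)$ for $i \geq v+1$ lies in $[0,1]$, since the LSH collision probability is monotone in similarity (and $x_{v+1}$ is at least as close to $q$ as $x_i$). Consequently, raising the ratio to any power larger than $K/2$ only makes it smaller, which is what will let us relate the $K$-power tail sum to the quantity $B$ that lives on the $K/2$ scale.

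The main computation I would carry out is: factor out $p(x_{v+1},q)^K$ and bound
\begin{equation*}
\sum_{i=v+1}^{N} p(x_i,q)^K = p(x_{v+1},q)^K \sum_{i=v+1}^{N}\!\left(\tfrac{p(x_i,q)}{p(x_{v+1},q)}\right)^{\!K} \leq p(x_{v+1},q)^K \sum_{i=v+1}^{N}\!\left(\tfrac{p(x_i,q)}{p(x_{v+1},q)}\right)^{\!K/2}
\end{equation*}
which is exactly $p(x_{v+1},q)^K \cdot B$ by the definition of $B$ with $\tilde s_i = \sqrt{s_i} = p(x_i,q)^{K/2}$. Then substitute $p(x_{v+1},q) = \Delta\, p(x_v,q)$ to obtain $\Delta^K B \cdot p(x_v,q)^K$, and verify that plugging in $K = \lceil 2\log B/\log(1/\Delta)\rceil$ drives $\Delta^K B$ down to $1/B \leq 1$ (assuming $B \geq 1$, which holds whenever there is any point beyond the $v$-th neighbor). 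This gives the first claim, and in fact a slightly stronger tail bound of $p(x_v,q)^K / B$ that will feed the second half cleanly.

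For the two $\ell_1$ bounds, split each sum at index $v$. Since every collision probability lies in $[0,1]$, the head portion $\sum_{i\leq v}$ contributes at most $v$ terms of size at most $1$ in both $|\mathbf{s}(q)|_1$ and $|\tilde{\mathbf{s}}(q)|_1$. For $|\mathbf{s}(q)|_1$, the tail is $\leq p(x_v,q)^K \leq 1$ by the first claim, yielding $v+1$. For $|\tilde{\mathbf{s}}(q)|_1$, I repeat the factoring trick with $K/2$ instead of $K$: the tail equals $p(x_{v+1},q)^{K/2} B \leq \Delta^{K/2} B$, and the choice of $K$ gives $\Delta^{K/2} B = 1$, so the tail is again $\leq 1$.

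The only subtle point I expect is the mild circularity in the definition of $B$, since $\tilde{s}_i$ itself depends on $K$. The cleanest way to handle this is to observe that for bases in $[0,1]$ the sum defining $B$ is \emph{monotone nonincreasing} in $K$, so the bound $\Delta^{K/2} B \leq 1$ can be interpreted consistently: one picks $K$ using any valid upper bound on $B$, and the computation above still goes through. I would flag this explicitly at the start of the proof and otherwise treat the algebra as essentially a one-line exponent calculation.
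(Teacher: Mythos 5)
Your proof is correct and follows essentially the same route as the paper's: factor (or divide) the tail sum by $p_{v+1}^K$, bound $\sum_{i>v}(p_i/p_{v+1})^K$ by $B=\sum_{i>v}(p_i/p_{v+1})^{K/2}$ using that the ratios lie in $[0,1]$, substitute $p_{v+1}=\Delta\,p_v$, and solve $\Delta^{-K/2}\geq B$ for $K$, with the two $\ell_1$ bounds following from a head contribution of at most $v$ plus a tail of at most $1$. Your explicit flagging of the circularity in the definition of $B$ (which depends on $K$ through $\tilde{s}_i$) and its resolution via monotonicity of $B$ in $K$ is a point the paper glosses over, but it does not change the argument.
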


\begin{figure*}[t]
\centering
\mbox{\hspace{-0.2in}
\includegraphics[height=1.8in]{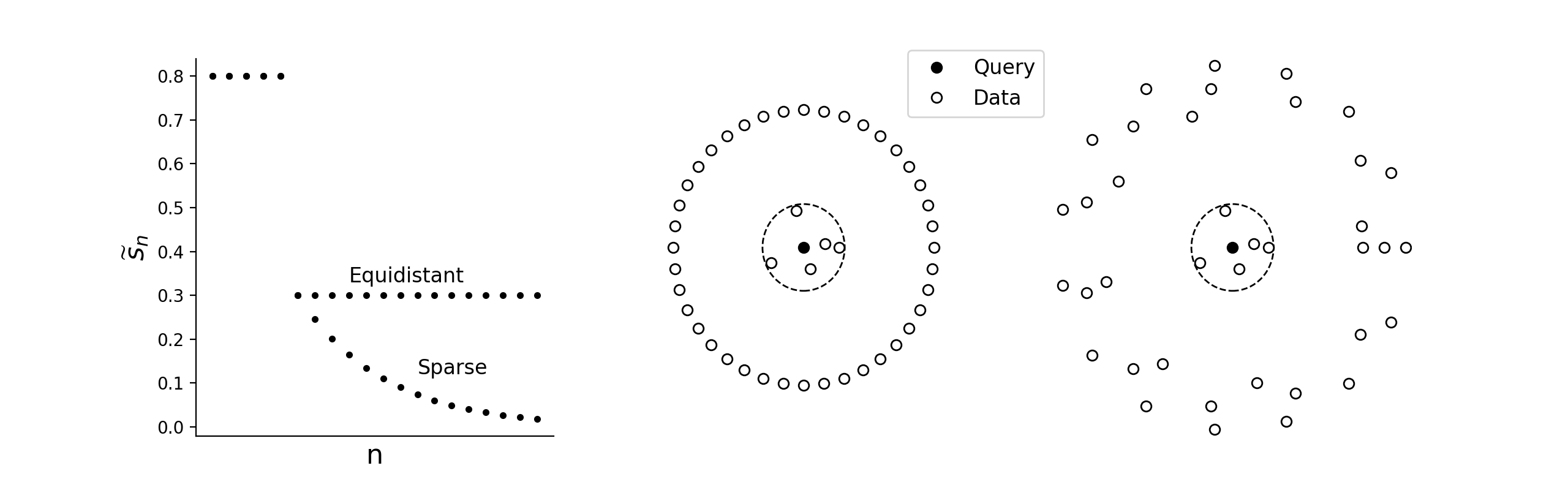}}
\vspace{-0.1in}
\caption{Geometric interpretation of $B$ and $\Delta$. $\Delta$ characterizes the gap between the $v$ nearest neighbors, while $B$ characterizes whether $\mathbf{s}$ is sparse. The worst-case situation occurs when all points are equidistant to the query (center). However, if $\mathbf{s}$ is already sparse, then far fewer points in the dataset are near the query (right).}
\label{fig:sparsity}
\end{figure*}

\begin{figure*}[t]
\centering
\mbox{\hspace{-0.2in}
\includegraphics[height=2.34in]{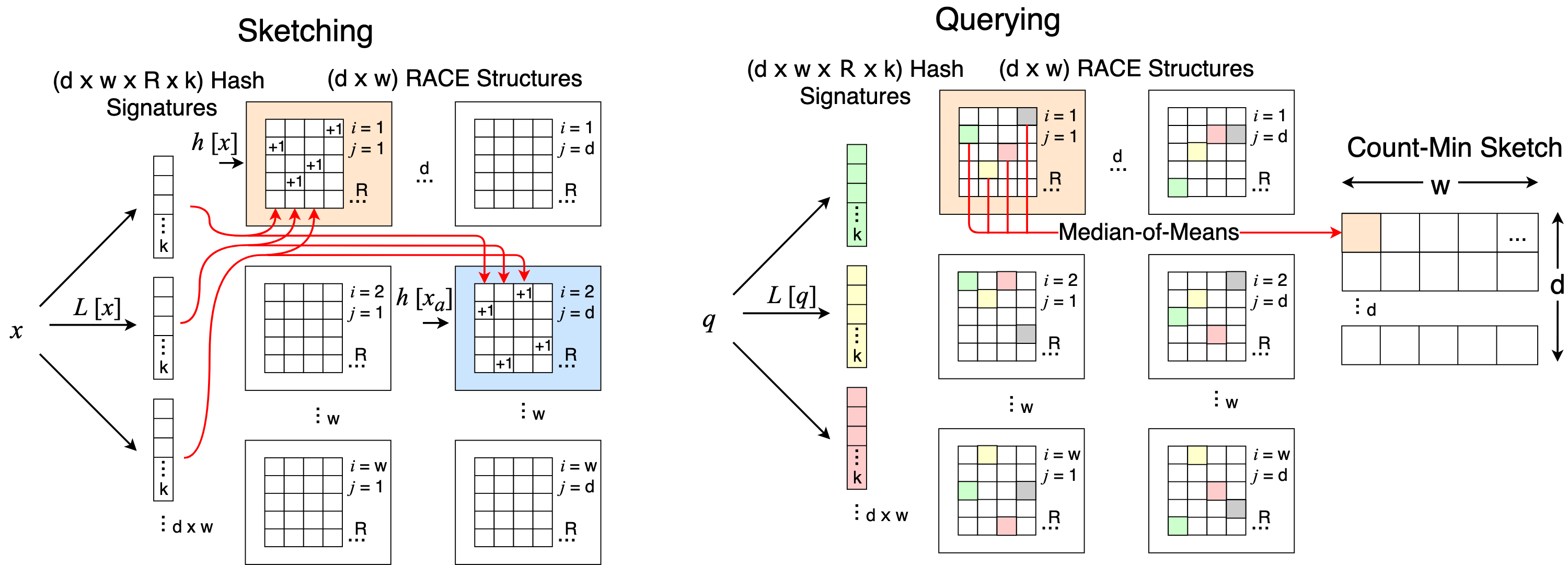}}
\vspace{-0.1in}
\caption{Implementation of sketching (Algorithm~\ref{alg:sketch}) and querying (Algorithm~\ref{alg:query}) using RACE data structures. During sketching, we compute $d\times w \times R \times k$ hash values for each $x \in \mathcal{D}$ and update the RACEs selected using $h(\cdot)$. During querying, we compute the hash values of the query $q$ and estimate the CMS measurements.}
\label{fig:algodiagram}
\end{figure*}

In practice, this assumption is unrealistically pessimistic because $\mathbf{s}(q)$ is often sufficiently sparse without any intervention using $K$. However, Theorem~\ref{thm:ChooseKtoBoundS} always allows us to choose $K$ so that $|\mathbf{s}(q)|_1$ is bounded by a constant. 



\subsection{Reduce Near-Neighbor to Compressed Recovery}
\label{sec:reduceNNtoCSforCMS}
We can apply Theorem~\ref{thm:AllACEErrorBound} to estimate each of the $M$ CMS measurements, which we call $\widehat{\CMS}$. We want to recover an estimate $\hat{\mathbf{s}}$ of $\mathbf{s}$ from our approximate compressed sensing measurements $\widehat{\CMS}$. Since the error $\epsilon_E$ in our approximation simply adds to the CMS recovery error $ \epsilon_C$ from Theorem~\ref{thm:CMSPointwiseGuarantee}, we can recover the values of $\mathbf{s}(q)$ by choosing appropriate values for $\epsilon_C$ and $\epsilon_E$. 
\begin{theorem}
\label{thm:PointwiseReconstructionBoundCMS}
We require 
$$ O\left(\frac{|\tilde{\mathbf{s}}(q)|_1^2 |\mathbf{s}(q)|_1}{\epsilon^3} \log\left(\frac{|\mathbf{s}(q)|_1}{\epsilon \delta}\log\left(\frac{N}{\delta}\right)\right)\log\left(\frac{N}{\delta}\right)\right)$$
ACE estimates to recover $\hat{\mathbf{s}}(q)$ with probability $1-\delta$ such that 
\begin{equation}
s_i(q) - \frac{\epsilon}{2} \leq \hat{s_i}(q) \leq s_i(q) + \frac{\epsilon}{2}
\end{equation}
\end{theorem}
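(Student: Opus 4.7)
The plan is to compose two error guarantees with a union bound: the CMS pointwise recovery guarantee of Theorem~\ref{thm:CMSPointwiseGuarantee} and the ACE measurement bound of Theorem~\ref{thm:AllACEErrorBound}. First I would split the target error budget of $\epsilon/2$ into an $\epsilon/4$ piece for CMS recovery and an $\epsilon/4$ piece for ACE approximation. Since Theorem~\ref{thm:CMSPointwiseGuarantee} produces an additive recovery error of $\epsilon_C |\mathbf{s}(q)|_1$, I set $\epsilon_C = \epsilon/(4|\mathbf{s}(q)|_1)$, which forces $w = O(|\mathbf{s}(q)|_1/\epsilon)$ columns and $d = O(\log(N/\delta))$ rows, giving a total of $M = dw = O(|\mathbf{s}(q)|_1\, \epsilon^{-1} \log(N/\delta))$ CMS cells to be estimated.

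Next I would invoke Theorem~\ref{thm:AllACEErrorBound} with accuracy $\epsilon_E = \epsilon/4$ and failure probability $\delta/2$, so that every one of these $M$ cells is simultaneously estimated within $\epsilon/4$ of its true value. This requires
$$ O\!\left(M \frac{|\tilde{\mathbf{s}}(q)|_1^2}{\epsilon^2} \log(M/\delta)\right) $$
ACE repetitions. Substituting the value of $M$ and observing that $\log(M/\delta) = O(\log(|\mathbf{s}(q)|_1 \log(N/\delta)/(\epsilon \delta)))$ recovers precisely the expression in the theorem.

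It then remains to check that the two guarantees compose correctly through the CMS recovery rule $\hat{s}_j(q) = \min_i \widehat{\CMS}_{i,h_i(j)}$. The uniform ACE bound $|\widehat{\CMS}_{i,j} - \CMS_{i,j}| \leq \epsilon/4$ passes through the minimum, so $\min_i \CMS_{i,h_i(j)} - \epsilon/4 \leq \hat{s}_j(q) \leq \min_i \CMS_{i,h_i(j)} + \epsilon/4$. Combining with Theorem~\ref{thm:CMSPointwiseGuarantee}, which conditional on the CMS hash choices gives $s_j(q) \leq \min_i \CMS_{i,h_i(j)} \leq s_j(q) + \epsilon/4$ with probability $1-\delta/2$, yields $s_j(q) - \epsilon/4 \leq \hat{s}_j(q) \leq s_j(q) + \epsilon/2$. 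A union bound over the CMS randomness and the ACE randomness then delivers the overall $1-\delta$ confidence.

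The main obstacle is bookkeeping rather than conceptual. Because the ACE estimator is unbiased but not guaranteed non-negative, a noisy cell can dip below the true count, so the min-of-rows reconstruction is no longer a one-sided upper estimator and we must track both sides of the error through the minimum; in particular we must verify that the $\epsilon/4$ slack suffices on both sides to match the claimed two-sided $|\hat{s}_i - s_i| \leq \epsilon/2$ bound. A secondary subtlety is confirming that after substituting $M$ into $\log(M/\delta)$ the result collapses cleanly into the nested-logarithm form of the stated bound, rather than producing extra $\log |\mathbf{s}(q)|_1$ or $\log \log(N/\delta)$ factors outside.
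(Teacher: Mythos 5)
Your proposal is correct and follows essentially the same route as the paper: split the error budget as $\epsilon_C = \epsilon/(4|\mathbf{s}(q)|_1)$ and $\epsilon_E = \epsilon/4$, take $M = O(|\mathbf{s}(q)|_1\,\epsilon^{-1}\log(N/\delta))$ CMS cells, invoke Theorem~\ref{thm:AllACEErrorBound} with $\delta_C = \delta_E = \delta/2$, and substitute $M$ into the repetition count. Your extra care about the two-sided error passing through the min-of-rows reconstruction is a point the paper glosses over in its proof, but it does not change the argument.
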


If $\mathbf{s}$ is sparse, then this result can be used to identify the top $v$ elements of $\mathbf{s}$ by setting $\epsilon = s_{v} - s_{v+1} = p_v^K - p_{v+1}^K$. These elements correspond to the largest kernel evaluations and therefore the nearest neighbors. For the equidistant case, we substitute the value of $K$ from Theorem~\ref{thm:ChooseKtoBoundS} into the expression in Theorem~\ref{thm:PointwiseReconstructionBoundCMS} to obtain our final results. Our main theorem is a simplified result that relates the size of the RACE sketch with the query-dependent parameters $\Delta$ and $p_v$. The full derivation, including the dependence on $\delta$, is available in the supplementary materials. 

\begin{theorem}
\label{thm:RACEforVNN_CMS}
It is possible to construct a sketch that solves the exact $v$-nearest neighbor problem with probability $1-\delta$ using $ O\left(N^b \log^3\left(N\right)\right)$
bits, where 
$$ b = \frac{6 |\log {p_v}| + 2\log r}{\log{\frac{1}{\Delta}}}$$
Here, $r$ is the range of the LSH function, and $p_v$ is the collision probability of the $v^{\text{th}}$ nearest neighbor with the query.
\end{theorem}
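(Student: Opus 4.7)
The plan is to chain together the three ingredients developed above: Theorem~\ref{thm:ChooseKtoBoundS} lets us choose the LSH power $K$ so that $\mathbf{s}(q)$ becomes sparse; Theorem~\ref{thm:PointwiseReconstructionBoundCMS} tells us how many RACE repetitions suffice to recover each coordinate of $\mathbf{s}(q)$ with additive error $\epsilon/2$; and the setting $\epsilon = s_v(q)-s_{v+1}(q)$ guarantees that the top-$v$ indices of $\hat{\mathbf{s}}(q)$ coincide with the true $v$ nearest neighbors. Feeding the expressions for $K$ and $\epsilon$ back through the sketch-size formula then yields the stated $O(N^b\log^3 N)$ bound.

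First I would fix $K=\lceil 2\log B/\log(1/\Delta)\rceil$ as in Theorem~\ref{thm:ChooseKtoBoundS}, which simultaneously gives $|\mathbf{s}(q)|_1\le v+1$, $|\tilde{\mathbf{s}}(q)|_1\le v+1$, and $p(x_{v+1},q)^K\le p(x_v,q)^K/B$. The second ingredient is the separation gap: setting
\begin{equation*}
\epsilon \;=\; p_v^K - p_{v+1}^K \;=\; p_v^K\bigl(1-\Delta^K\bigr) \;\ge\; \tfrac{1}{2}\,p_v^K,
\end{equation*}
where the last inequality uses $\Delta^K\le 1/B^2\le 1/2$ for any nontrivial $B$. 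With this choice, Theorem~\ref{thm:PointwiseReconstructionBoundCMS} together with a union bound over the $N$ coordinates certifies that $\hat{s}_i(q)$ is within $\epsilon/2$ of $s_i(q)$ for all $i$, so thresholding at the midpoint $(s_v+s_{v+1})/2$ correctly recovers the $v$ nearest neighbors with probability $1-\delta$.

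Next I would compute the total memory. A single RACE array occupies $O(r^K \log N)$ bits, since there are $r^K$ counters each holding a count in $[0,N]$. The number of ACE repetitions supplied by Theorem~\ref{thm:PointwiseReconstructionBoundCMS}, after substituting $|\mathbf{s}(q)|_1,|\tilde{\mathbf{s}}(q)|_1=O(v)$ and $\epsilon\approx p_v^K$, is $O(p_v^{-3K}\log^2 N)$, absorbing the $v$ and $\log(1/\delta)$ factors into constants. Multiplying these together gives a total memory of
\begin{equation*}
O\!\left( r^{K}\,p_v^{-3K}\,\log^{3} N\right).
\end{equation*}
The key algebraic step is to rewrite $r^K p_v^{-3K}=\exp\bigl(K(\log r+3|\log p_v|)\bigr)$ and plug in $K = 2\log B/\log(1/\Delta)$, producing $B^{(2\log r+6|\log p_v|)/\log(1/\Delta)}$. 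In the pessimistic case $B=O(N)$ this is exactly $N^{b}$ with $b$ as in the statement; for more sparse datasets, $B\ll N$ and the exponent is correspondingly smaller.

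The main technical obstacle I anticipate is the bookkeeping that hides behind the innocuous substitution: one must verify that the $1/\epsilon^{3}$ factor in Theorem~\ref{thm:PointwiseReconstructionBoundCMS} really does dominate the $1/\epsilon$ inside the logarithm after $K$ is substituted, and that the $(1-\Delta^K)$ factor in $\epsilon$ does not introduce a hidden $B$ dependence that spoils the exponent. A secondary subtlety is the ceiling in $K$, which can inflate $r^K$ and $p_v^{-K}$ by at most constant multiplicative factors and is therefore absorbed into the $O(\cdot)$. Everything else --- the union bound across the $M=O(v\log(N/v))$ compressed-sensing measurements and the $O(\log N)$ bits per counter --- contributes only polylogarithmic factors that are collected into the $\log^{3} N$ term.
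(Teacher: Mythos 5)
Your proposal is correct and follows essentially the same route as the paper: choose $K=\lceil 2\log B/\log(1/\Delta)\rceil$ via Theorem~\ref{thm:ChooseKtoBoundS} (the paper specializes immediately to the equidistant case $B=O(N)$, which you do at the end), set $\epsilon=p_v^K-p_{v+1}^K$, feed the resulting $|\mathbf{s}|_1,|\tilde{\mathbf{s}}|_1=O(v)$ and $\epsilon^{-1}=N^{2|\log p_v|/\log(1/\Delta)}$ into Theorem~\ref{thm:PointwiseReconstructionBoundCMS}, and multiply by the $r^K\log N$ bits per ACE to obtain the exponent $b$. The only nit is your aside that $M=O(v\log(N/v))$ --- the paper's CMS actually uses $M=O(|\mathbf{s}|_1\epsilon^{-1}\log(N/\delta))$ measurements, which is polynomial rather than polylogarithmic in $N$ --- but this does not affect your computation since the $1/\epsilon^3$ you carry from Theorem~\ref{thm:PointwiseReconstructionBoundCMS} already accounts for it.
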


\section{Experiments}

In this section, we rigorously evaluate our RACE-CMS sketch on friend recommendation tasks on social network graphs, similar to the ones described in~\cite{sharma2017hashes}. Our goal is to compare and contrast the practical compression-accuracy tradeoff of RACE with streaming baselines. We use the Google Plus social network dataset, obtained from~\cite{leskovec2012learning}, and the Twitter and Slashdot graphs from~\cite{snapnets} to evaluate our algorithm. Google Plus is a directed graph of 107,614 Google Plus users, where each element in the dataset is an adjacency list of connections to other users. The uncompressed dataset size is 121 MB when stored in a sparse format as the smallest possible unsigned integral type. The other datasets are structured the same way, with similar sizes. Additional statistics are displayed in Table \ref{table:1}. These characteristics are typical for large scale graphs, where the data is high dimensional and sparse. Note that the low mean similarity between elements indirectly implies that $\mathbf{s}(q)$ is sparse.

\begin{table}[t]
\caption{Dataset Statistics}
\centering
\begin{tabular}{c|c| c| c| c }
 \hline
 Dataset & Nodes & Nonzeros & \makecell{Mean\\Edges} & \makecell{Mean\\Similarity} \\ 
 \hline
 Google+ & 108k & 13.6M & 127 & 0.002 \\ 
 \hline
 Twitter & 81.4k & 1.8M & 22 & 2.2e-4 \\ 
 \hline
 Slashdot & 82.2k & 1.1M & 13 & 1.4e-5 \\ 
 \hline
\end{tabular}
\label{table:1}
\vspace{-0.1in}
\end{table}

\begin{figure*}[t]
\centering
\mbox{\hspace{-0.2in}
\includegraphics[width=2.4in]{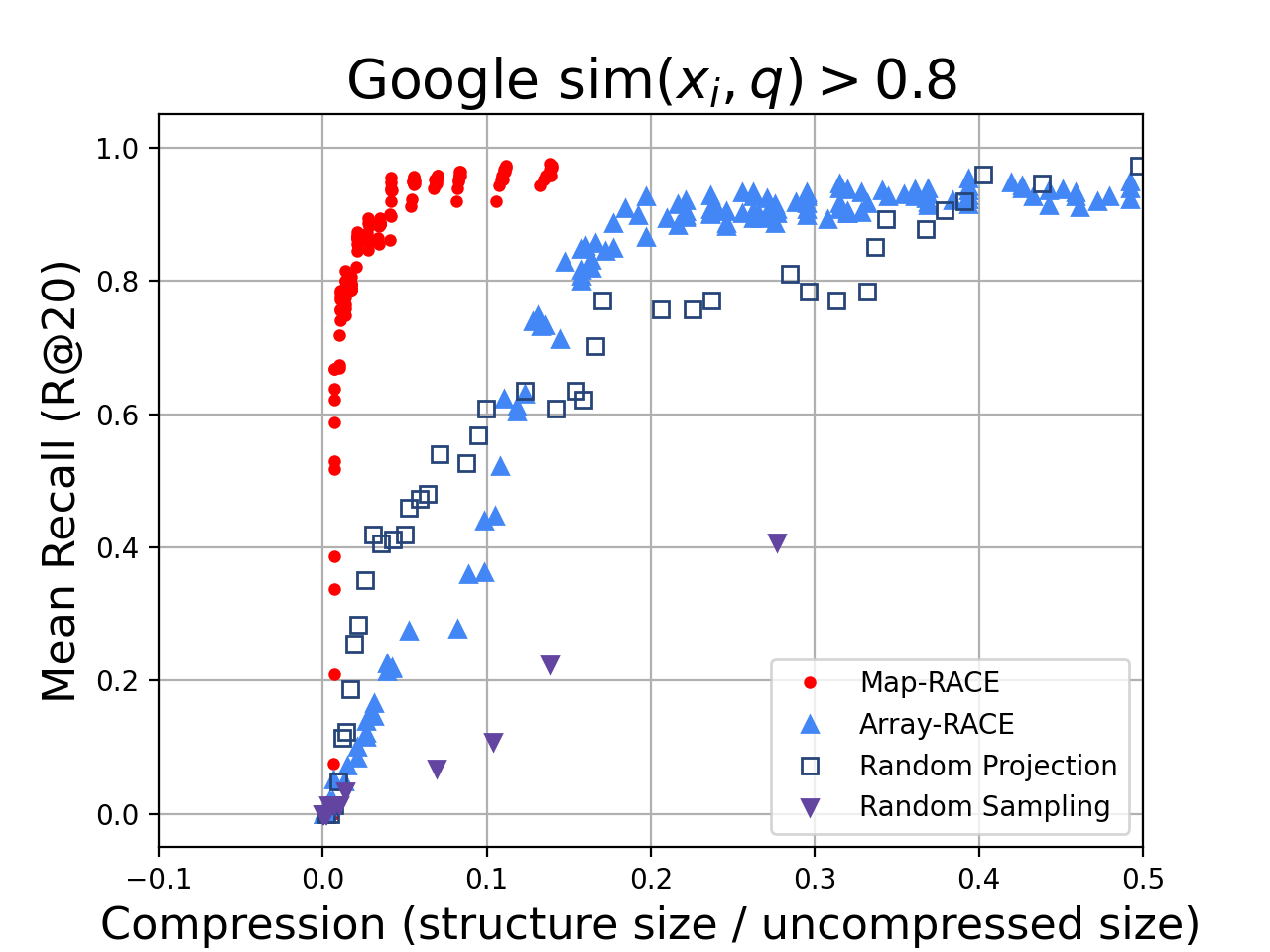}
\includegraphics[width=2.4in]{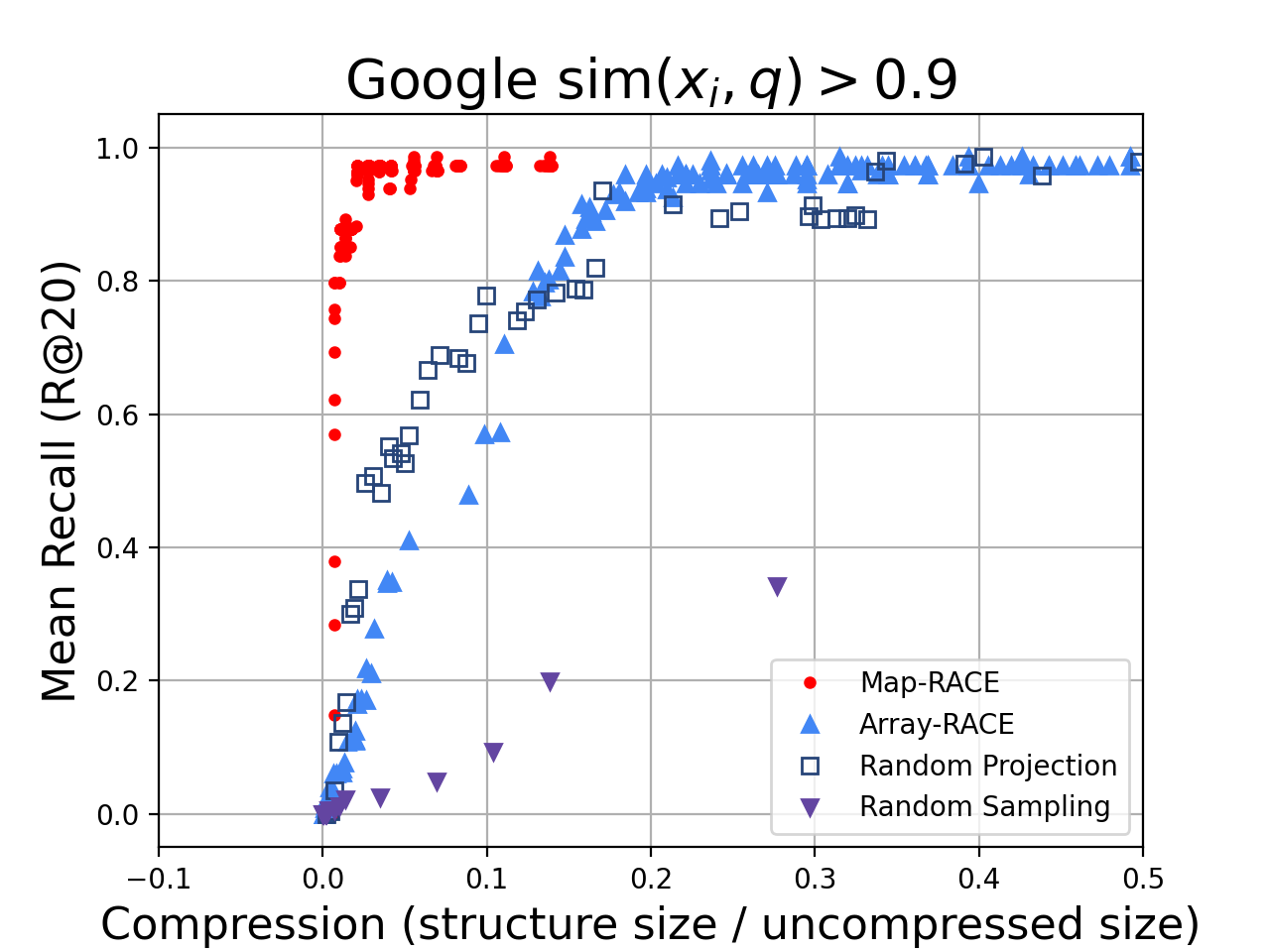}
}
\mbox{\hspace{-0.2in}
\includegraphics[width=2.4in]{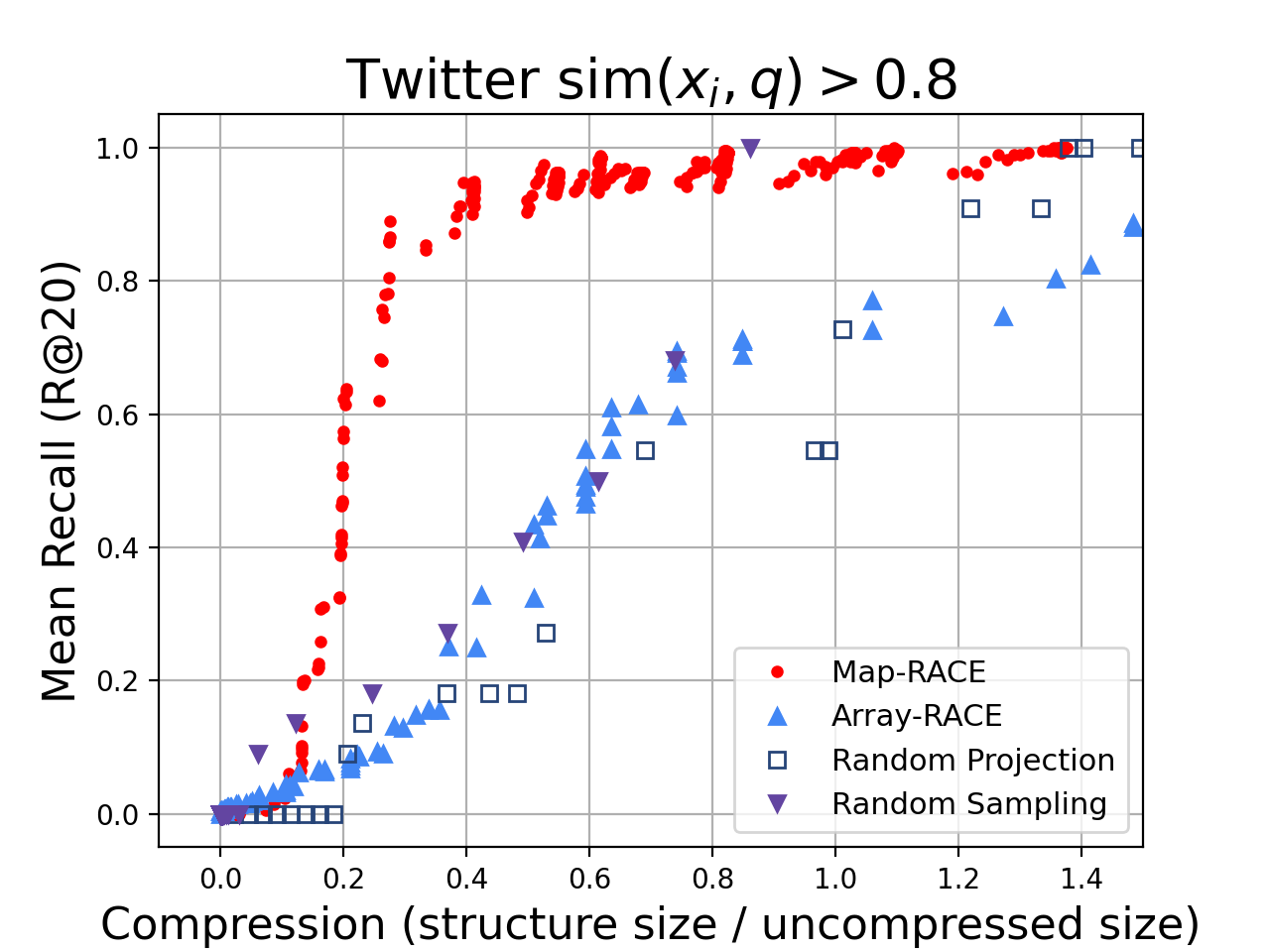}
\includegraphics[width=2.4in]{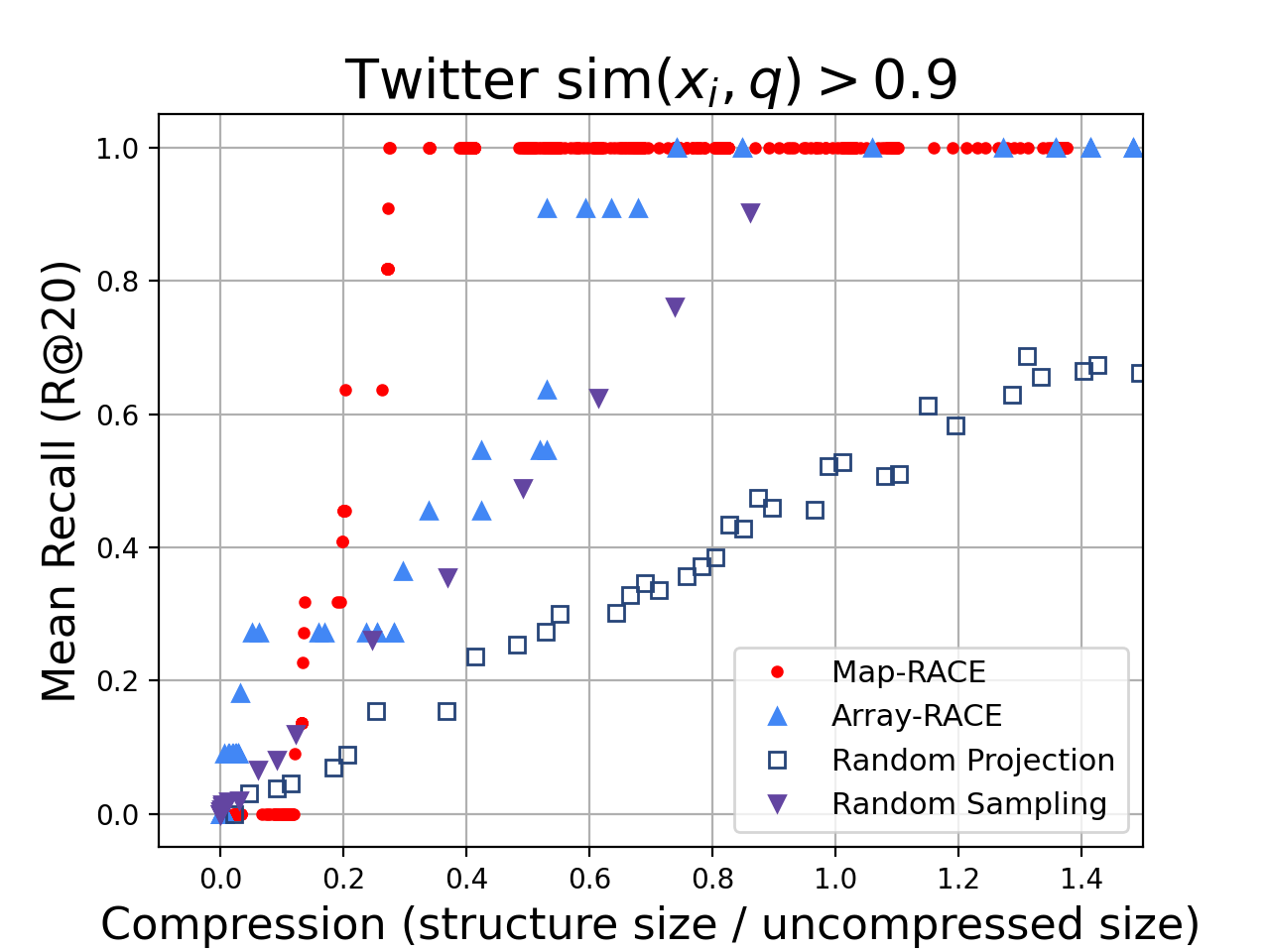}
}
\mbox{\hspace{-0.2in}
\includegraphics[width=2.4in]{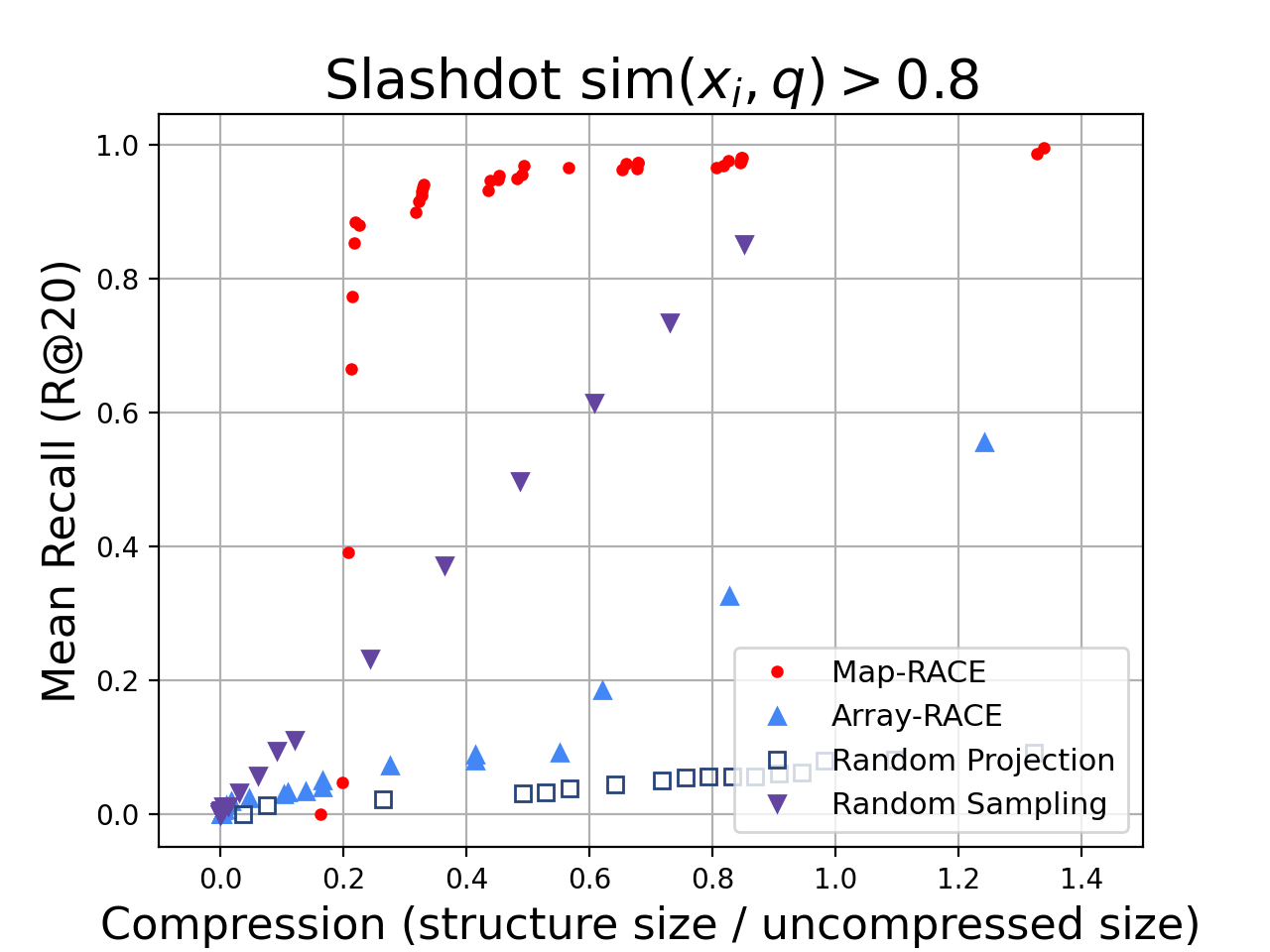}
\includegraphics[width=2.4in]{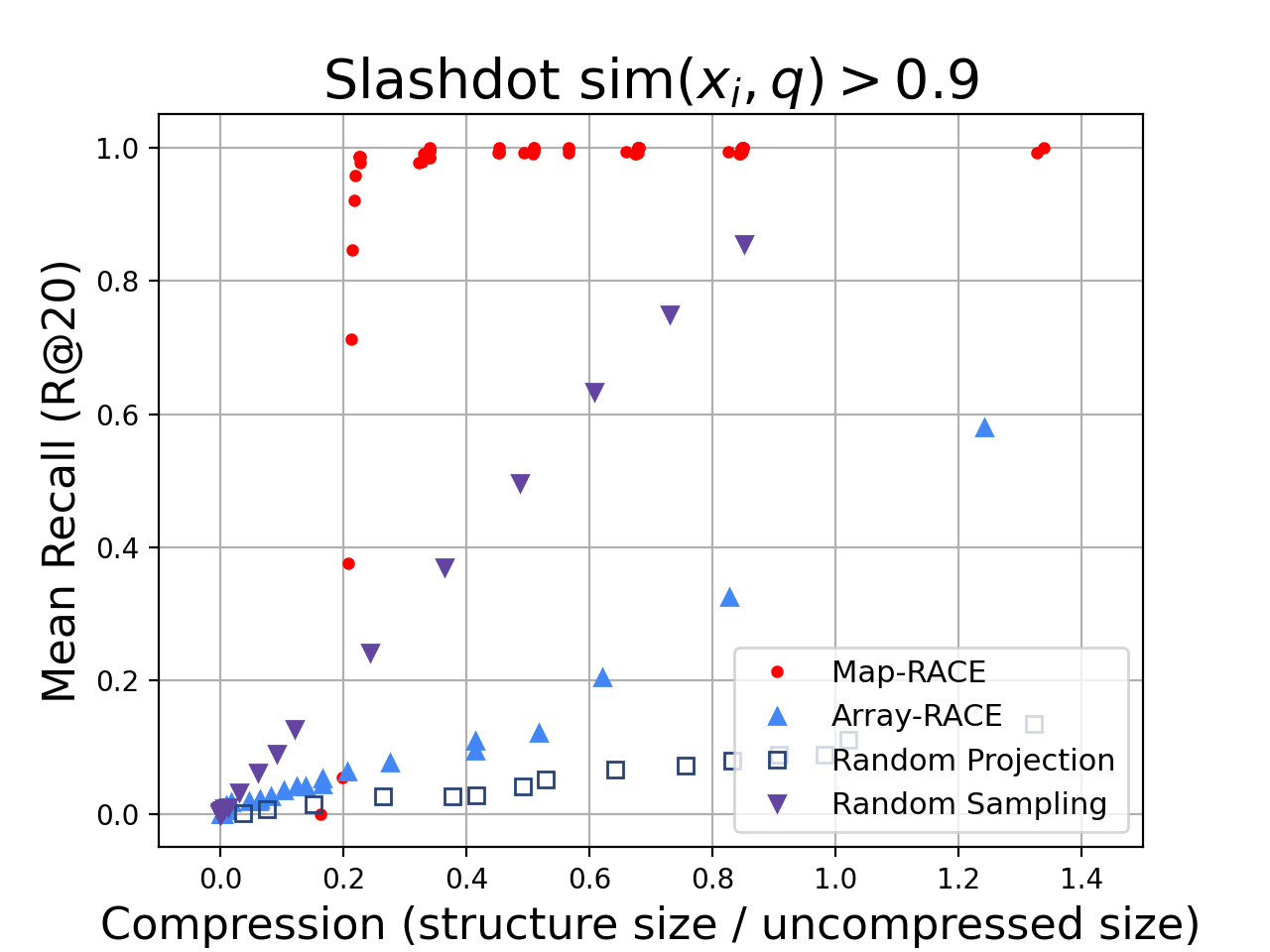}
}
\vspace{-0.1in}
\caption{Average recall vs compressed dataset size. The dataset size is expressed as the inverse compression ratio, or the ratio of the compressed size to the uncompressed size. Recall is reported as the average recall of neighbors with Jaccard similarity $\similarity(x,q) \geq 0.8$ (left) and $0.9$ (right) over the set of queries. Higher is better. We report the recall of nodes with similarity greater than or equal to $0.8$ and $0.9$ for the top 20 search results of the query. Results are averaged over $>500$ queries. }
\label{fig:accuracyplot}
\end{figure*}

\subsection{Implementation}
We use the RACE-CMS sketch that was presented in Section~\ref{sec:theory}. However, we slightly deviate from the algorithm described in Algorithm~\ref{alg:sketch} in our implementation by rehashing the $K$ LSH hash values to a range $r$ using a universal hash function. Our algorithm is characterized by the hyperparameters $K, d, w, R$ and $r$ and by the hash functions $l(\cdot)$ and $h(\cdot)$. Here, $l(\cdot)$ is MinHash, an LSH function for the Jaccard distance. We use MurmurHash for $h(\cdot)$, the universal hash function in the CMS. For all experiments, we vary $K$, $d, w$ and $R$ to trade off memory for performance. We present the operating points on the Pareto frontier for all algorithms. Typical values of $d$ are between 2 and 5, $w$ between 100 and 1000, and $R$ between 2 and 8. We varied the range $r$ between 100 and 1000 and used $K \in \{1,2\}$.

We implemented RACE-CMS in C++ with the following considerations. First, we do not store the RACE counters as full 32-bit integers. The count values tend to be small because the CMS only assigns each data point to $d$ cells out of $dw$ total cells, and each each cell further divides the counts into the RACE arrays. In our evaluation, we used 16-bit short integers, although more aggressive memory optimizations are likely possible. For example, we found that all counts were less than 32 in our Google Plus experiments, suggesting that 8-bit integer arrays are sufficient. The second optimization comes from our observation that many count values are zero. By storing the RACE sketches as sparse arrays or maps, we do not have to store the zero counts. We present results for the situation where we store dense arrays of counts (Array-RACE) and where we store RACE as a sparse array (Map-RACE). An implementation diagram is shown in Figure~\ref{fig:algodiagram}. 


\subsection{Baselines}
We compare our method with dimensionality reduction and random sampling followed by exact near-neighbor search. We reduce the size of the dataset until a given compression ratio is achieved and then find the nearest neighbors with the Euclidean distance. We compare against all methods that can operate in the strict one-pass streaming environment~\cite{fiat1998online}, which is required in many high-speed applications. We considered a comparison with product quantization using FAISS~\cite{johnson2017billion} but we encountered issues due to the dimensionality ($> 100k$) of our graph data, which agrees with previous evaluations of FAISS on high-dimensional data~\cite{wang2018randomized}. Samples are represented using 32-bit integer node IDs and are stored in sparse format, since the graph vectors tend to have many zeros. Projections are stored as dense arrays of single-precision (32-bit) floating point numbers.


\textbf{Random Projections:} We use sparse random projections~\cite{achlioptas2003database} and the Johnson-Lindenstrauss lemma to reduce the dimensionality of the dataset. This is the best known streaming method that is also practical. \\
\textbf{Random Sampling:} With random sampling, we reduce the original dataset to the desired size by selecting a random subset of elements of the dataset. Given a query, we perform exact nearest neighbor search on the random samples. 

\subsection{Experimental Setup}
We computed the ground truth Jaccard similarities and nearest neighbors for each vector in the dataset. We are primarily interested in queries for which high similarity neighbors exist in the dataset due to the constraints of the friend recommendation problem. This is also consistent with the near-neighbor problem statements in Section~\ref{sec:problemstatements}, which assume the existence of a near-neighbor. We return the 20 nearest neighbors and report the recall of points with similarity greater than 0.8 and 0.9 to the query. To confirm that the sketch is not simply memorizing our queries, we remove the query from the dataset before creating the sketch.

For random projections, we performed a sweep of the number of random projections from 5 to 500. Random sampling was performed by decimating the dataset (without replacement) so that the sampled dataset had the desired size. 




\subsection{Results}
Figure \ref{fig:accuracyplot} shows the mean recall of ground-truth neighbors for the RACE-CMS sketch. Array-RACE and Map-RACE are both implementations of our method, but with a different underlying data structure used to represent the RACE sketch. We obtained good recall ($>0.85$) on the set of queries with high-quality neighbors ($\mathrm{sim}(x,q) > 0.9$) even for an extreme 20x compression ratio on Google Plus and 5x compression ratios on the other datasets. Since many entries in the array are zero, we find that Map-RACE outperforms Array-RACE by a sizeable margin.

It is evident that RACE performs best for high similarity search. This is due to increased sparsity of $\mathbf{s}(q)$ (any two random users are unlikely to share a friend and hence have similarity zero) and higher $p(x_v,q)$. In the recommender system setting, we usually wish to recommend nodes with very high similarity. If we require the algorithm to recover neighbors on the Google Plus graph with similarity measure greater than 0.9 with an expected recall of 80\% or higher, our algorithm requires only 5\% of the space of the original dataset (6 MB) while random projections require 60 MB (50\%) and random sampling requires nearly the entire dataset. For neighbors with lower similarity (0.8), our method requires roughly one quarter of the memory needed by random projections.

\section{Conclusion}
We have presented RACE-CMS, the first sub-linear memory algorithm for near-neighbor search. Our analysis connects the stability of a near-neighbor search problem with the memory required to provide an accurate solution. Additionally, our core idea of using LSH to estimate compressed sensing measurements creates a sketch that can encode structural information and can process data not seen during the sketching process. 

We supported our theoretical findings with experimental results. In practical test settings, RACE-CMS outperformed existing methods for low-memory near-neighbor search by a factor of 10. We expect that RACE-CMS will enable large-scale similarity search for a variety of applications and will find utility in situations where memory and communication are limiting factors.

\section*{Acknowledgements}
This work was supported by National Science Foundation IIS-1652131, BIGDATA-1838177, RI-1718478, AFOSR- YIP FA9550-18-1-0152, Amazon Research Award, and the ONR BRC grant on Randomized Numerical Linear Algebra. 


\bibliography{main}

\begin{thebibliography}{31}
\providecommand{\natexlab}[1]{#1}
\providecommand{\url}[1]{\texttt{#1}}
\expandafter\ifx\csname urlstyle\endcsname\relax
  \providecommand{\doi}[1]{doi: #1}\else
  \providecommand{\doi}{doi: \begingroup \urlstyle{rm}\Url}\fi

\bibitem[Achlioptas(2003)]{achlioptas2003database}
Achlioptas, D.
\newblock Database-friendly random projections: Johnson-lindenstrauss with
  binary coins.
\newblock \emph{Journal of computer and System Sciences}, 66\penalty0
  (4):\penalty0 671--687, 2003.

\bibitem[Agarwal et~al.(2005)Agarwal, Har-Peled, and
  Varadarajan]{agarwal2005geometric}
Agarwal, P.~K., Har-Peled, S., and Varadarajan, K.~R.
\newblock Geometric approximation via coresets.
\newblock \emph{Combinatorial and computational geometry}, 52:\penalty0 1--30,
  2005.

\bibitem[Andoni et~al.(2014)Andoni, Indyk, Nguyen, and
  Razenshteyn]{andoni2014beyond}
Andoni, A., Indyk, P., Nguyen, H.~L., and Razenshteyn, I.
\newblock Beyond locality-sensitive hashing.
\newblock In \emph{Proceedings of the twenty-fifth annual ACM-SIAM symposium on
  Discrete algorithms}, pp.\  1018--1028. Society for Industrial and Applied
  Mathematics, 2014.

\bibitem[{Apple Differential Privacy Team}(2017)]{apple2017}
{Apple Differential Privacy Team}.
\newblock Learning with privacy at scale.
\newblock \emph{Apple Machine Learning Journal}, 1\penalty0 (8), 2017.

\bibitem[Baraniuk(2007)]{baraniuk2007compressive}
Baraniuk, R.~G.
\newblock Compressive sensing.
\newblock \emph{IEEE signal processing magazine}, 24\penalty0 (4), 2007.

\bibitem[Beyer et~al.(1999)Beyer, Goldstein, Ramakrishnan, and
  Shaft]{beyer1999nearest}
Beyer, K., Goldstein, J., Ramakrishnan, R., and Shaft, U.
\newblock When is “nearest neighbor” meaningful?
\newblock In \emph{International conference on database theory}, pp.\
  217--235. Springer, 1999.

\bibitem[Candes \& Plan(2011)Candes and Plan]{candes2011probabilistic}
Candes, E.~J. and Plan, Y.
\newblock A probabilistic and ripless theory of compressed sensing.
\newblock \emph{IEEE transactions on information theory}, 57\penalty0
  (11):\penalty0 7235--7254, 2011.

\bibitem[Charikar \& Siminelakis(2017)Charikar and
  Siminelakis]{charikar2017hashing}
Charikar, M. and Siminelakis, P.
\newblock Hashing-based-estimators for kernel density in high dimensions.
\newblock In \emph{Foundations of Computer Science (FOCS), 2017 IEEE 58th
  Annual Symposium on}, pp.\  1032--1043. IEEE, 2017.

\bibitem[Ching et~al.(2015)Ching, Edunov, Kabiljo, Logothetis, and
  Muthukrishnan]{ching2015one}
Ching, A., Edunov, S., Kabiljo, M., Logothetis, D., and Muthukrishnan, S.
\newblock One trillion edges: Graph processing at facebook-scale.
\newblock \emph{Proceedings of the VLDB Endowment}, 8\penalty0 (12):\penalty0
  1804--1815, 2015.

\bibitem[Coleman \& Shrivastava(2020)Coleman and Shrivastava]{coleman2020race}
Coleman, B. and Shrivastava, A.
\newblock Sub-linear race sketches for approximate kernel density estimation on
  streaming data.
\newblock In \emph{Proceedings of the 2020 World Wide Web Conference}.
  International World Wide Web Conferences Steering Committee, 2020.

\bibitem[Cormode \& Muthukrishnan(2005)Cormode and
  Muthukrishnan]{cormode2005improved}
Cormode, G. and Muthukrishnan, S.
\newblock An improved data stream summary: the count-min sketch and its
  applications.
\newblock \emph{Journal of Algorithms}, 55\penalty0 (1):\penalty0 58--75, 2005.

\bibitem[Donoho(2006)]{donoho2006compressed}
Donoho, D.~L.
\newblock Compressed sensing.
\newblock \emph{IEEE Transactions on information theory}, 52\penalty0
  (4):\penalty0 1289--1306, 2006.

\bibitem[Fiat(1998)]{fiat1998online}
Fiat, A.
\newblock Online algorithms: The state of the art (lecture notes in computer
  science).
\newblock 1998.

\bibitem[Gionis et~al.(1999)Gionis, Indyk, Motwani,
  et~al.]{gionis1999similarity}
Gionis, A., Indyk, P., Motwani, R., et~al.
\newblock Similarity search in high dimensions via hashing.
\newblock In \emph{Vldb}, volume~99, pp.\  518--529, 1999.

\bibitem[Har-Peled \& Kumar(2014)Har-Peled and Kumar]{har2014down}
Har-Peled, S. and Kumar, N.
\newblock Down the rabbit hole: Robust proximity search and density estimation
  in sublinear space.
\newblock \emph{SIAM Journal on Computing}, 43\penalty0 (4):\penalty0
  1486--1511, 2014.

\bibitem[Hsu et~al.(2006)Hsu, King, Paradesi, Pydimarri, and
  Weninger]{hsu2006collaborative}
Hsu, W.~H., King, A.~L., Paradesi, M.~S., Pydimarri, T., and Weninger, T.
\newblock Collaborative and structural recommendation of friends using
  weblog-based social network analysis.
\newblock In \emph{AAAI Spring Symposium: Computational Approaches to Analyzing
  Weblogs}, volume~6, pp.\  55--60, 2006.

\bibitem[Indyk(2013)]{indyk2013sketching}
Indyk, P.
\newblock Sketching via hashing: from heavy hitters to compressed sensing to
  sparse fourier transform.
\newblock In \emph{Proceedings of the 32nd ACM SIGMOD-SIGACT-SIGAI symposium on
  Principles of database systems}, pp.\  87--90. ACM, 2013.

\bibitem[Indyk \& Motwani(1998)Indyk and Motwani]{indyk1998approximate}
Indyk, P. and Motwani, R.
\newblock Approximate nearest neighbors: towards removing the curse of
  dimensionality.
\newblock In \emph{Proceedings of the thirtieth annual ACM symposium on Theory
  of computing}, pp.\  604--613. ACM, 1998.

\bibitem[Indyk \& Wagner(2017)Indyk and Wagner]{indyk2017near}
Indyk, P. and Wagner, T.
\newblock Near-optimal (euclidean) metric compression.
\newblock In \emph{Proceedings of the Twenty-Eighth Annual ACM-SIAM Symposium
  on Discrete Algorithms}, pp.\  710--723. SIAM, 2017.

\bibitem[Indyk \& Wagner(2018)Indyk and Wagner]{indyk2018approximate}
Indyk, P. and Wagner, T.
\newblock Approximate nearest neighbors in limited space.
\newblock \emph{arXiv preprint arXiv:1807.00112}, 2018.

\bibitem[Jegou et~al.(2010)Jegou, Douze, and Schmid]{jegou2010product}
Jegou, H., Douze, M., and Schmid, C.
\newblock Product quantization for nearest neighbor search.
\newblock \emph{IEEE transactions on pattern analysis and machine
  intelligence}, 33\penalty0 (1):\penalty0 117--128, 2010.

\bibitem[Johnson et~al.(2019)Johnson, Douze, and J{\'e}gou]{johnson2017billion}
Johnson, J., Douze, M., and J{\'e}gou, H.
\newblock Billion-scale similarity search with gpus.
\newblock \emph{IEEE Transactions on Big Data}, 2019.

\bibitem[Johnson \& Lindenstrauss(1984)Johnson and
  Lindenstrauss]{johnson1984extensions}
Johnson, W.~B. and Lindenstrauss, J.
\newblock Extensions of lipschitz mappings into a hilbert space.
\newblock \emph{Contemporary mathematics}, 26\penalty0 (189-206):\penalty0 1,
  1984.

\bibitem[Leskovec \& Krevl(2014)Leskovec and Krevl]{snapnets}
Leskovec, J. and Krevl, A.
\newblock {SNAP Datasets}: {Stanford} large network dataset collection.
\newblock \url{http://snap.stanford.edu/data}, June 2014.

\bibitem[Leskovec \& Mcauley(2012)Leskovec and Mcauley]{leskovec2012learning}
Leskovec, J. and Mcauley, J.~J.
\newblock Learning to discover social circles in ego networks.
\newblock In \emph{Advances in neural information processing systems}, pp.\
  539--547, 2012.

\bibitem[Luo \& Shrivastava(2018)Luo and Shrivastava]{luo2018arrays}
Luo, C. and Shrivastava, A.
\newblock Arrays of (locality-sensitive) count estimators (ace): Anomaly
  detection on the edge.
\newblock In \emph{Proceedings of the 2018 World Wide Web Conference on World
  Wide Web}, pp.\  1439--1448. International World Wide Web Conferences
  Steering Committee, 2018.

\bibitem[Ma et~al.(2009)Ma, Saul, Savage, and Voelker]{ma2009identifying}
Ma, J., Saul, L.~K., Savage, S., and Voelker, G.~M.
\newblock Identifying suspicious urls: an application of large-scale online
  learning.
\newblock In \emph{Proceedings of the 26th annual international conference on
  machine learning}, pp.\  681--688. ACM, 2009.

\bibitem[Sharma et~al.(2017)Sharma, Seshadhri, and Goel]{sharma2017hashes}
Sharma, A., Seshadhri, C., and Goel, A.
\newblock When hashes met wedges: A distributed algorithm for finding high
  similarity vectors.
\newblock In \emph{Proceedings of the 26th International Conference on World
  Wide Web}, pp.\  431--440. International World Wide Web Conferences Steering
  Committee, 2017.

\bibitem[Spring \& Shrivastava(2017)Spring and Shrivastava]{spring2017new}
Spring, R. and Shrivastava, A.
\newblock A new unbiased and efficient class of lsh-based samplers and
  estimators for partition function computation in log-linear models.
\newblock \emph{arXiv preprint arXiv:1703.05160}, 2017.

\bibitem[Tropp \& Gilbert(2007)Tropp and Gilbert]{tropp2007signal}
Tropp, J.~A. and Gilbert, A.~C.
\newblock Signal recovery from random measurements via orthogonal matching
  pursuit.
\newblock \emph{IEEE Transactions on information theory}, 53\penalty0
  (12):\penalty0 4655--4666, 2007.

\bibitem[Wang et~al.(2018)Wang, Shrivastava, Wang, and Ryu]{wang2018randomized}
Wang, Y., Shrivastava, A., Wang, J., and Ryu, J.
\newblock Randomized algorithms accelerated over cpu-gpu for ultra-high
  dimensional similarity search.
\newblock In \emph{Proceedings of the 2018 International Conference on
  Management of Data}, pp.\  889--903, 2018.

\end{thebibliography}
\bibliographystyle{icml2020}

\clearpage

\setcounter{theorem}{2}

\section{Supplementary Materials}

We obtain our results by combining recent advances in locality-sensitive hashing (LSH)-based estimation with standard compressed sensing techniques. This section contains a high-level overview of our strategy to solve the nearest-neighbor problem.

\paragraph{LSH-based kernel estimators:} The array-of-counts estimator (ACE) is an unbiased estimator for kernel functions. Our first step is to use ACE to estimate arbitrary linear combinations of kernels. We get sharp estimates of these linear combinations by averaging over multiple ACEs. We call this structure a RACE because it consists of \textit{repeated} ACEs. The number of repetitions needed for a good estimate does not depend on $N$, the dataset size. Once we have sharp estimates of the measurements, we apply standard compressed sensing techniques.

\paragraph{Compressed sensing:} A central result of compressed sensing is that a $v$ sparse vector of length $N$ can be recovered from $O(v \log N/v)$ linear combinations of its elements. The coefficients of the linear combination are defined by the measurement matrix. In this context, our measurements are of the vector $\mathbf{s}(q) \in \mathbb{R}^N$, where the $i^{\text{th}}$ component of $\mathbf{s}(q)$ is the kernel evaluation $k(x_i,q)$. Since we are using LSH kernels, $k(x_i,q)$ is the LSH collision probability of $q$ and $x_i \in \mathcal{D}$. That is, $\mathbf{s}_i(q) = p(x_i,q)$. We will use the terms LSH kernel value and collision probability interchangeably. 

We use one RACE structure to estimate each compressed sensing measurement. Each RACE gets a different set of linear combination coefficients, and we choose the coefficients so that they describe a valid measurement matrix. By applying sparse recovery to the set of estimated measurements, we can approximate the kernel evaluations (LSH collision probabilities) between $q$ and each element in the dataset. As explained in the ``Intuition'' section of the main text, these kernel evaluations are sufficient to perform near neighbor search. Assuming sparsity, the sketch is sublinear because each RACE requires a constant amount of memory and we only need to use $O(v \log N/v )$ RACEs. 

\paragraph{Query-dependent guarantees:} To find neighbors for a query ($q$), we recover the kernel values ($\mathbf{s}(q)$) and return the indices with the largest values as the identities of the near-neighbors. This process will not succeed if $\mathbf{s}(q)$ is not {\em sparse}. Sparsity is the reason for our query-dependent assumptions. To find the nearest indices, we need $\mathbf{s}(q)$ to have few large elements. The geometric interpretation is that most elements in the dataset are not near-neighbors of $q$. We show that well-established notions of near-neighbor stability~\cite{beyer1999nearest} are equivalent to weak sparsity conditions on $\mathbf{s}(q)$, allowing us to express our algorithm in terms of near neighbor stability. This result connects sparsity - a compressed sensing idea - with the difficulty of the near-neighbor search problem. We can analyze a large class of geometric data assumptions by interpreting them as sparsity conditions. 

If the dataset already satisfies our sparsity condition, then we proceed directly to recovery. If not, we can force $\mathbf{s}(q)$ to be sparse by raising the kernel function $k(x_i,q)$ to a power $K$. This modification decreases the bandwidth of the kernel, letting us locate near-neighbors at a finer resolution. RACE can accommodate this idea by using standard methods for amplifying a LSH family. Specifically, we construct the LSH function from $K$ independent realizations of an LSH family. The result is a new LSH function with the collision probability $p(x,q)^K$. However, there is a price - the size of each ACE repetition grows larger. 

\paragraph{Reduce near-neighbor to compressed sensing recovery:}Using compressed sensing, we can estimate the kernel values within an $\epsilon$ additive tolerance. To solve the near-neighbor problem, we make $\epsilon$ small enough to distinguish between near-neighbors and the rest of the dataset. The value of $\epsilon$ depends on $K$. Increasing $K$ makes $\mathbf{s}(q)$ sparse but also increases the amount of storage required for the sketch. Therefore, we want $K$ to be \textit{just large enough}. By balancing the sparsity requirement with the memory, we introduce a query-dependent multiplicative $O(N^b)$ factor for the sketch size. This term is sub-linear ($b < 1$) when $\mathbf{s}(q)$ is sufficiently sparse or, equivalently, when $q$ is a stable query. Our sketch requires $O(N^b \log^3 (N))$ bits, where $b$ depends on the query stability.

\section{Theory}
In this section, we provide a detailed explanation of the theory with complete proofs.

\subsection{Estimation of Compressed Sensing Measurements}
\label{sec:estimateCSmeasurements}
In this section, our goal is to prove that the RACE algorithm can estimate the compressed sensing measurements of $\mathbf{s}(q)$, the vector of kernel evaluations. We begin by constructing a modified version of ACE that can estimate any linear combination of $\mathbf{s}(q)$ components. Then, we derive a variance bound on this estimator and apply the median of means technique. 

We can estimate the linear combination by incrementing the ACE array using the linear combination coefficients. Suppose we are given a sequence of linear combination coefficients $\{r_i\}_{i = 1}^N$. The original ACE estimator simply increments the array $A$ at index $L(x_i)$ by 1. We will use the notation $\I_i$ to refer to the indicator function $\I_{L(x_i) = L(q)}$. That is, $\I_i$ is 1 when the query collides with element $x_i$ from the dataset. For the original ACE algorithm, $A[L(q)] = \sum_{x_i\in\mathcal{D}} \I_i$. In our case, we increment $A[L(x_i)]$ by $r_i$ and therefore we have $A[L(q)] = \sum_{x_i\in\mathcal{D}} r_i \I_i$. The expectation of this estimator is the linear combination of LSH kernels (collision probabilities).

\begin{theorem}
\label{thm:ACELinearCombos}
Given a dataset $\mathcal{D}$, $K$ independent LSH functions $l(\cdot)$ and any choice of constants $r_i \in \mathbb{R}$, RACE can estimate a linear combination of $\mathbf{s}_i(q) = p(x_i,q)^K$ with the following variance bound. 
\begin{equation}
         \E[A[L(q)]] = \sum_{x_i \in \mathcal{D}} r_i p(x_i,q)^K
\end{equation}
\begin{equation}
\var(A[l(q)]) \leq |\tilde{\mathbf{s}}(q)|^2_1\\
\end{equation}
where $L(\cdot)$ is formed by concatenating the $K$ copies of $l(\cdot)$ and $\tilde{\mathbf{s}}_i(q) = \sqrt{\mathbf{s}_i(q)}$. 
\end{theorem}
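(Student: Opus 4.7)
The plan is to write the read-out cell $A[L(q)]$ explicitly as a linear combination of collision indicators and then analyse its mean and variance separately. Modifying Algorithm~\ref{alg:sketch} so that $x_i$ contributes weight $r_i$ instead of $1$, the cell queried at $L(q)$ is
\begin{equation*}
A[L(q)] \;=\; \sum_{i=1}^{N} r_i\, \I_i, \qquad \I_i \,:=\, \I\{L(x_i) = L(q)\}.
\end{equation*}
Because $L$ is the concatenation of $K$ independent LSH functions $l_1,\dots,l_K$, the single-hash collision event is independent across the $K$ coordinates, so $\E[\I_i] = \prod_{k=1}^{K}\Pr[l_k(x_i)=l_k(q)] = p(x_i,q)^K = s_i(q)$. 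Linearity of expectation then immediately yields the first claim.

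The variance requires more care because the indicators $\{\I_i\}$ are \emph{not} pairwise independent: every $\I_i$ depends on the common random value $L(q)$, so a naive ``sum of variances'' step is unavailable. I would expand
\begin{equation*}
\var(A[L(q)]) \;=\; \sum_{i,j} r_i r_j\,\mathrm{Cov}(\I_i, \I_j),
\end{equation*}
and bypass the three-point dependence structure by controlling each covariance with Cauchy--Schwarz on the raw second moments. Since $\I_i,\I_j \in \{0,1\}$,
\begin{equation*}
\E[\I_i \I_j] \;\le\; \sqrt{\E[\I_i^2]\,\E[\I_j^2]} \;=\; \sqrt{s_i(q)\, s_j(q)} \;=\; \tilde{s}_i(q)\,\tilde{s}_j(q),
\end{equation*}
and since $\E[\I_i]\E[\I_j]\ge 0$ we conclude $\mathrm{Cov}(\I_i,\I_j) \le \tilde{s}_i(q)\tilde{s}_j(q)$. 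Plugging in and using that in the CMS instantiation the coefficients satisfy $|r_i|\le 1$ (in fact $r_i \in \{0,1\}$, indicating whether $x_i$ was mapped to the cell of interest), the double sum factors as
\begin{equation*}
\var(A[L(q)]) \;\le\; \sum_{i,j} \tilde{s}_i(q)\tilde{s}_j(q) \;=\; \Bigl(\sum_i \tilde{s}_i(q)\Bigr)^{\!2} \;=\; |\tilde{\mathbf{s}}(q)|_1^2.
\end{equation*}

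The main obstacle is really a modelling issue rather than a technical one: as stated, the theorem quantifies over all $r_i \in \mathbb{R}$ yet the right-hand side has no $r_i$ dependence, so some implicit bound on the coefficients is needed. The natural reading, consistent with how the result is used in Theorem~\ref{thm:AllACEErrorBound} and Theorem~\ref{thm:PointwiseReconstructionBoundCMS}, is $|r_i|\le 1$, which is automatic for the CMS $\{0,1\}$ selectors. Beyond that, everything reduces to (i) factorisation of the collision probability across the $K$ independent LSH coordinates, (ii) one application of Cauchy--Schwarz to tame the shared dependence on $L(q)$, and (iii) recognising the residual double sum as the square of an $\ell_1$ norm.
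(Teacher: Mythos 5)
Your proof is correct and follows essentially the same route as the paper's: write $A[L(q)]=\sum_i r_i \I_i$, take expectations by linearity, and bound the variance by the second moment via Cauchy--Schwarz on $\E[\I_i\I_j]$, yielding $\bigl(\sum_i \tilde{s}_i(q)\bigr)^2$. The ``modelling issue'' you flag is resolved in the paper exactly as you guess: it explicitly assumes $r_i\in[-1,1]$, justified by rescaling the measurement matrix.
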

\begin{proof}
For the sake of presentation, let $Z = A[L(q)]$.

\paragraph{Expectation:} The count in the array can be written as
$$Z = \sum_{x_i \in \mathcal{D}} r_i \I_i$$
By linearity of the expectation operator
$$\E[Z]\ = \sum_{x_i \in \mathcal{D}} r_i \E[\I_i]$$
$\E[\I_i]$ is simply the collision probability of $L$, thus 
$$\E[Z]\ = \sum_{x_i \in \mathcal{D}} r_i p(x_i,q)^K$$
\paragraph{Variance:} The variance is bounded above by the second moment. The second moment of this estimator can be written as 
$$\E[Z^2]\ = \sum_{x_i \in \mathcal{D}}\sum_{x_j \in \mathcal{D}} r_i r_j \E[\I_i \I_j]$$
Use the Cauchy-Schwarz inequality to bound $\E[\I_i \I_j] \leq \sqrt{\E[\I_i]}\sqrt{\E[\I_j]}$. Thus 
$$\E[Z^2]\ \leq \sum_{x_i\in \mathcal{D}} \sum_{x_j\in \mathcal{D}} r_i r_j \sqrt{p(x_i,q)^K}\sqrt{p(x_j,q)^K}$$
$$= \left(\sum_{x_i\in\mathcal{D}} r_i \sqrt{p(x_i,q)^K})\right)^2$$

For our analysis, we will assume that $r_i \in [-1,1]$. This is valid because we can always scale the compressed sensing matrix so that it is true. Then the bound becomes
$$\var(Z) \leq \left(\sum_{x_i\in\mathcal{D}} \sqrt{p(x_i,q)^K})\right)^2 = |\mathbf{\tilde{s}}(q)|_1^2$$

\end{proof}

Using Theorem~\ref{thm:ACELinearCombos} and the median-of-means (MoM) technique, we can obtain an arbitrarily close estimate of each compressed sensing measurement $y_i(q)$. Suppose we independently repeat the ACE estimator and compute the MoM estimate from the repetitions. Let $\hat{y}_i(q)$ be the MoM estimate of $y_i(q)$ computed from a set of independent ACE repetitions of $A[l(q)]$. Then we have a pointwise bound on the error for each $y_i(q)$. 

\begin{lemma}
\label{lem:OneACEErrorBound}
For any $\epsilon > 0$ and given
$$ O\Big(\frac{|\tilde{\mathbf{s}}(q)|^2_1}{\epsilon^2}\log\Big(\frac{1}{\delta}\Big)\Big) $$
independent ACE repetitions, we have the following bound for the MoM estimator 
\begin{equation}
y_i(q) - \epsilon \leq \hat{y}_i(q) \leq y_i(q) + \epsilon\\
\end{equation}
with probability $1-\delta$ for any query $q$.
\end{lemma}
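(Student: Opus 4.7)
The plan is to prove this via the standard median-of-means (MoM) boosting construction applied to the variance bound from Theorem~\ref{thm:ACELinearCombos}. First I would partition the $n = n_1 \cdot n_2$ independent ACE repetitions into $n_2$ disjoint groups of size $n_1$ each, and let $\bar{Z}_1, \ldots, \bar{Z}_{n_2}$ denote the empirical means within each group. The final estimator $\hat{y}_i(q)$ is the median of the $\bar{Z}_j$. By Theorem~\ref{thm:ACELinearCombos}, each individual ACE count has expectation equal to $y_i(q)$ (the linear combination we wish to estimate) and variance bounded by $|\tilde{\mathbf{s}}(q)|_1^2$. Independence of the repetitions gives $\E[\bar{Z}_j] = y_i(q)$ and $\var(\bar{Z}_j) \leq |\tilde{\mathbf{s}}(q)|_1^2 / n_1$.

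Next I would apply Chebyshev's inequality to each group mean: for a chosen constant failure probability (say $1/3$), we have $\Pr[|\bar{Z}_j - y_i(q)| > \epsilon] \leq |\tilde{\mathbf{s}}(q)|_1^2 / (n_1 \epsilon^2) \leq 1/3$ provided $n_1 = O(|\tilde{\mathbf{s}}(q)|_1^2 / \epsilon^2)$. Thus each group mean lies in the ``good'' interval $[y_i(q) - \epsilon, y_i(q) + \epsilon]$ with probability at least $2/3$, independently across groups.

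The final step is a standard Chernoff argument on the binomial count of good groups. The median $\hat{y}_i(q)$ fails to lie in $[y_i(q) - \epsilon, y_i(q) + \epsilon]$ only if more than half of the $n_2$ group means lie outside this interval. Since each group independently produces an outside value with probability at most $1/3 < 1/2$, a Hoeffding/Chernoff bound on Bernoulli variables yields failure probability at most $e^{-c n_2}$ for an absolute constant $c > 0$. Setting $n_2 = O(\log(1/\delta))$ brings this below $\delta$. Multiplying the two factors gives a total of $n_1 n_2 = O\bigl(|\tilde{\mathbf{s}}(q)|_1^2 \epsilon^{-2} \log(1/\delta)\bigr)$ repetitions, matching the claimed bound.

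The argument is almost entirely bookkeeping on top of the variance bound already proved, so I do not expect a serious obstacle. The only subtlety worth flagging is that Theorem~\ref{thm:ACELinearCombos} bounds the variance via Cauchy--Schwarz under the implicit normalization $r_i \in [-1, 1]$; to invoke Lemma~\ref{lem:OneACEErrorBound} for a general compressed sensing measurement we may need to rescale the measurement matrix (and correspondingly $\epsilon$) so the coefficients fit in this range, but this is absorbed into constants and does not affect the asymptotic count. Everything else reduces to the textbook MoM template.
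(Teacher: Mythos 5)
Your proposal is correct and follows essentially the same route as the paper: both apply the median-of-means concentration guarantee to the variance bound from Theorem~\ref{thm:ACELinearCombos} and solve for the number of repetitions. The only difference is that the paper invokes the standard MoM bound (with an explicit constant of 32) as a known result, whereas you unpack it into the Chebyshev-per-group plus Chernoff-on-the-median argument; your closing remark about the $r_i \in [-1,1]$ normalization matches the assumption the paper makes when proving the variance bound.
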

\begin{proof}
For presentation, we will drop the index and write $\hat{y}_i(q)$ as $\hat{y}$ where the context is clear. We use a very common proof technique with the median-of-means estimator $\hat{y}$. With probability at least $1 - \delta$ and $n$ independent realizations of the random variable, we can estimate the mean with MoM so that 
$$ \mathrm{Pr}\left[|\hat{y} - y| \leq \sqrt{32\frac{\var(\hat{y})}{n} \log\left(\frac{1}{\delta}\right)}\right] \geq 1 - \delta $$
We can substitute the variance bound from Theorem~\ref{thm:ACELinearCombos} in for $\var(\hat{y})$ without changing the validity of the inequality. To have the lemma, we need $|\hat{y} - y|\leq \epsilon$. We will choose $n$ to be large enough that 
$$ \sqrt{32\frac{|\mathbf{\tilde{s}}(q)|_1^2}{n} \log\left(\frac{1}{\delta}\right)} \leq \epsilon$$
Therefore, we need $n$ ACE repetitions, where $n$ is
$$ n = 32 \frac{|\tilde{\mathbf{s}}(q)|^2_1}{\epsilon^2}\log\Big(\frac{1}{\delta}\Big) $$

\end{proof}

Lemma~\ref{lem:OneACEErrorBound} only works for one of the compressed sensing measurements. To ensure that all $M$ of the measurements obey this bound with probability $1-\delta$, we apply the probability union bound to get Theorem~\ref{thm:AllACEErrorBound}. Note that the multiplicative $M$ factor comes from the fact that we are using ACE to estimate $M$ different measurements. 

\begin{theorem}
\label{thm:AllACEErrorBound}
For any $\epsilon > 0$ and given
$ O\Big(M \frac{|\tilde{\mathbf{s}}(q)|^2_1}{\epsilon^2}\log\Big(\frac{M}{\delta}\Big)\Big)$
independent ACE repetitions, we have the following bound for each of the $M$ measurements with probability $1-\delta$ for any query $q$
\begin{equation}
y_i(q) - \epsilon \leq \hat{y}_i(q) \leq y_i(q) + \epsilon\\
\end{equation}
\end{theorem}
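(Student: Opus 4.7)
The plan is to reduce Theorem~\ref{thm:AllACEErrorBound} to Lemma~\ref{lem:OneACEErrorBound} by a standard union bound over the $M$ compressed sensing measurements. Each of the $M$ measurements $y_i(q)$ is estimated by its own MoM estimator $\hat{y}_i(q)$ built from an independent collection of ACE repetitions, and Lemma~\ref{lem:OneACEErrorBound} already controls the error of a single such estimator. So the only genuinely new work is to amplify the per-measurement success probability enough that a union bound still yields overall failure probability at most $\delta$.

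Concretely, I would proceed as follows. First, fix an index $i \in \{1,\dots,M\}$ and apply Lemma~\ref{lem:OneACEErrorBound} with failure parameter $\delta' = \delta/M$ in place of $\delta$. The lemma then guarantees that
\begin{equation}
\Pr\!\left[\,|\hat{y}_i(q) - y_i(q)| \leq \epsilon\,\right] \;\geq\; 1 - \frac{\delta}{M},
\end{equation}
provided we use
\begin{equation}
n \;=\; O\!\left(\frac{|\tilde{\mathbf{s}}(q)|_1^2}{\epsilon^2}\log\!\left(\frac{M}{\delta}\right)\right)
\end{equation}
independent ACE repetitions for measurement $i$. Second, since the $M$ measurements are estimated from independent sets of ACEs, the total number of ACE repetitions needed across all measurements is $M \cdot n$, matching the bound claimed in the theorem. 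Third, apply the union bound over the $M$ bad events $\{|\hat{y}_i(q) - y_i(q)| > \epsilon\}$ to obtain
\begin{equation}
\Pr\!\left[\,\exists\, i :\, |\hat{y}_i(q) - y_i(q)| > \epsilon\,\right] \;\leq\; M \cdot \frac{\delta}{M} \;=\; \delta,
\end{equation}
which is exactly the desired simultaneous guarantee.

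There is no real obstacle here; the only thing to be careful about is bookkeeping. In particular one should note that the variance bound from Theorem~\ref{thm:ACELinearCombos} is $|\tilde{\mathbf{s}}(q)|_1^2$ uniformly in $i$ (as long as the coefficients $r_{i,j}$ defining each measurement lie in $[-1,1]$, which we may assume by rescaling the compressed sensing matrix, exactly as in the proof of Theorem~\ref{thm:ACELinearCombos}), so the same per-measurement sample complexity $n$ works for every $i$. It is also worth pointing out that the statement holds for any fixed query $q$; making it hold uniformly over all queries in some class would require an additional union or covering argument, but the theorem as stated does not ask for that.
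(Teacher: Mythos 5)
Your proof is correct and follows exactly the paper's argument: apply Lemma~\ref{lem:OneACEErrorBound} with failure probability $\delta/M$ per measurement, multiply the per-measurement repetition count by $M$, and union bound over the $M$ bad events. The extra remarks on the uniformity of the variance bound and on the meaning of ``for any query $q$'' are sensible bookkeeping but do not change the argument.
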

\begin{proof}
We want all measurements to succeed with probability $1 - \delta$. The probability union bound states that if $\delta_i$ is the failure probability for measurement $i$, then the overall failure probability is smaller than $\sum_{i = 1}^M \delta_i$. We would like this probability to be smaller than $\delta$, so we put $\delta_i = \frac{\delta}{M}$ for each RACE estimator. By Lemma~\ref{lem:OneACEErrorBound}, we need 
$$ 32\frac{|\tilde{\mathbf{s}}(q)|^2_1}{\epsilon^2}\log\Big(\frac{M}{\delta}\Big)$$
repetitions for each measurement. There are $M$ measurements, so we need 
$$ 32M\frac{|\tilde{\mathbf{s}}(q)|^2_1}{\epsilon^2}\log\Big(\frac{M}{\delta}\Big)$$
repetitions in total. 
\end{proof}


\subsection{Query-Dependent Sparsity Conditions}
\label{sec:sparsityconditions}

Before we can discuss compressed recovery of $\mathbf{s}(q)$, we need to limit our analysis to vectors $\mathbf{s}(q)$ that are sparse. In this section, we introduce a permissive way to bound the sparsity of $\mathbf{s}(q)$ for our analysis. Our bounds are forgiving in the sense that we assume as little underlying sparsity as possible - with stronger assumptions, you can get better bounds. We also connect sparsity with near-neighbor stability. We analyze these conditions in the context of compressed sensing and computational geometry.

We need bounds for $|\mathbf{s}(q)|_1$ and $|\tilde{\mathbf{s}}(q)|_1$. Our vector $\mathbf{s}(q)$ has three properties that make these bounds possible. First, the collision probabilities are bounded: $p(x_i,q) \in [0,1]$. Second, increasing $K$ causes each element of $\mathbf{s}(q)$ to decrease, since $s(q)_i = p(x_i,q)^K$. Third we may choose $K$ to be as large as necessary. Therefore, we can force $|\mathbf{s}(q)|_1$ to be arbitrarily small by choosing $K$ sufficiently large. However, each ACE estimator requires $O(r^K\log{N})$ memory where $r$ is the number of hash codes that $L$ can return. Therefore, we want $K$ to be \textit{just large enough} so that we do not increase the space too much. 

We will analyze sparsity under the \textit{equidistant} assumption. Under this assumption, all points other than the $v$ nearest neighbors are equidistant to the query. This is a relatively weak way to describe sparsity, but we still get an acceptable dependence of $K$ on $N$. Stronger assumptions require smaller $K$ and therefore less space. To choose $K$, we need a good way to characterize the sparsity of $\mathbf{s}(q)$. We begin by defining two query-dependent values $\Delta$ and $B$. $\Delta$ is related to the stability of the near-neighbor query and $B$ is related to sparsity.

\paragraph{$\Delta$-Stable Queries:}
We want a parameter that measures the difficulty of the query. For the $v$-nearest neighbor problem, let $x_v$ and $x_{v+1}$ be the $v^{\text{th}}$ and $(v+1)^{\text{th}}$ nearest neighbors, respectively. Using the same notation as before, let $\Delta$ be defined as 
\begin{equation}
    \Delta = \frac{p(x_{v+1},q)}{p(x_v,q)}
\end{equation}
$\Delta$ governs the stability of the nearest neighbor query. It is a measure of the gap between the near-neighbors and the rest of the dataset. If $\Delta = 1$, then the $v^{\text{th}}$ and $(v+1)^{\text{th}}$ neighbors are the same distance away. In this case, it is impossible to tell the difference between them. If $\Delta \approx 0$, then it means that neighbors $v+1$, $v+2$, ... are all very far away. Our definition of $\Delta$ is similar to the definition of an $\epsilon$-unstable query. In fact, we can express a $\Delta$-stable query as an $\epsilon$-unstable query by finding the distances that correspond to $p(x_{v+1},q)$ and $p(x_{v},q)$. This is possible because the collision probability is a monotone function of distance. 





\paragraph{$B$-Bounded Queries:} We want a flexible way to bound the sum:

$$|\mathbf{s}(q)|_1 = \sum_{x_i \in \mathcal{D}} p(x_i,q)^K$$

For convenience, we will suppose that the elements $x_i$ are sorted based on their distance from the query. This is not necessary - it just simplifies the presentation. When we write $p(x_i,q)$, we mean that $x_i$ is the $i^{\text{th}}$ near neighbor of the query. We will use the notation $p_i = p(x_i,q)$. Define a constant $B$ as
\begin{equation}
    B =\sum_{i = v+1}^N \frac{\tilde{s}_i}{\tilde{s}_{v+1}} = \sum_{i = v+1}^N \sqrt{\frac{p_i^K}{p_{v+1}^K}}
\end{equation}
$B$ is a query-dependent value that measures the sparsity of $\mathbf{s}$. It bounds the size of the tail entries of $\mathbf{s}$. A bound on $B$ implies a bound on $|\mathbf{s}|_1$ and $|\tilde{\mathbf{s}}|_1$. 
\begin{lemma}
\label{lem:vBBounds}
$|\mathbf{s}|_1 \leq |\tilde{\mathbf{s}}|_1$ and 
$|\tilde{\mathbf{s}}|_1 \leq v + B \sqrt{p_{v+1}^K}$
\end{lemma}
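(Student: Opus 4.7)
The plan is to prove the two inequalities in Lemma~\ref{lem:vBBounds} separately, each via a short direct argument that relies only on elementary facts about numbers in $[0,1]$ and the definition of $B$.

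For the first inequality $|\mathbf{s}|_1 \leq |\tilde{\mathbf{s}}|_1$, I would observe that each entry $\mathbf{s}_i = p_i^K$ is a collision probability raised to a power, so $\mathbf{s}_i \in [0,1]$. For any $x \in [0,1]$ we have $x \leq \sqrt{x}$, hence $\mathbf{s}_i \leq \sqrt{\mathbf{s}_i} = \tilde{\mathbf{s}}_i$ pointwise. Summing over $i = 1, \dots, N$ gives the claim immediately. This is the easy half and requires no use of the sorting, of $v$, or of $B$.

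For the second inequality $|\tilde{\mathbf{s}}|_1 \leq v + B\sqrt{p_{v+1}^K}$, I would split the sum at index $v$:
\begin{equation}
|\tilde{\mathbf{s}}|_1 \;=\; \sum_{i=1}^{v} \tilde{s}_i \;+\; \sum_{i=v+1}^{N} \tilde{s}_i.
\end{equation}
For the head, each $\tilde{s}_i = p_i^{K/2} \in [0,1]$, so $\sum_{i=1}^v \tilde{s}_i \leq v$. For the tail, I would factor out $\tilde{s}_{v+1} = \sqrt{p_{v+1}^K}$ and use the definition of $B$ directly:
\begin{equation}
\sum_{i=v+1}^{N} \tilde{s}_i \;=\; \tilde{s}_{v+1}\sum_{i=v+1}^{N} \frac{\tilde{s}_i}{\tilde{s}_{v+1}} \;=\; B\sqrt{p_{v+1}^K}.
\end{equation}
Adding the two estimates yields the bound. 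The argument implicitly assumes $\tilde{s}_{v+1} > 0$; if $\tilde{s}_{v+1} = 0$, then by the sorting convention $\tilde{s}_i = 0$ for all $i \geq v+1$ and the tail sum is zero, so the inequality holds trivially.

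There is no real obstacle here: both inequalities reduce to elementary manipulations once one notes the monotonicity $x \leq \sqrt{x}$ on $[0,1]$ and unpacks the definition of $B$. The only mild subtlety is bookkeeping, namely being careful that $\tilde{s}_i = \sqrt{\mathbf{s}_i} = p_i^{K/2}$ and handling the degenerate case $\tilde{s}_{v+1}=0$. Everything else is a one-line computation, which is why this lemma is stated as a preliminary bound to be fed into the subsequent choice of $K$ in Theorem~\ref{thm:ChooseKtoBoundS}.
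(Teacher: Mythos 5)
Your proof is correct and follows essentially the same route as the paper's: the pointwise bound $x \leq \sqrt{x}$ on $[0,1]$ for the first inequality, and splitting the sum at index $v$, bounding the head by $v$ and factoring $\sqrt{p_{v+1}^K}$ out of the tail to invoke the definition of $B$. Your explicit handling of the degenerate case $\tilde{s}_{v+1}=0$ is a small addition the paper omits, but otherwise the arguments coincide.
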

\begin{proof}

It is easy to see that $\mathbf{s}_i \leq \sqrt{\mathbf{s}_i}$ because $0 \leq \mathbf{s}_i \leq 1$. 
For the second inequality, break the summation for $|\tilde{\mathbf{s}}|_1$ into two components: 
$$ |\tilde{\mathbf{s}}|_1 = \sum_{i = 1}^v \sqrt{p_i^K} + \sum_{i = v+1}^N \sqrt{p_i^K}$$
The first term corresponds to the nearest $v$ points in the dataset. The second term corresponds to the rest of the dataset. For the first term, we will use the trivial bound that $\sqrt{p_i^K} \leq 1$. For the second term, 

$$ \sum_{i = v+1}^N \sqrt{p_i^K} = \sum_{i = v+1}^N \sqrt{p_i^K}\frac{\sqrt{p_{v+1}^K}}{\sqrt{p_{v+1}^K}} = \sqrt{p_{v+1}^K} B$$

\end{proof}

Using $B$ and $\Delta$, we can find a value of $K$ that bounds $|\mathbf{s}(q)|_1$ and $|\tilde{\mathbf{s}}(q)|_1$. 

\begin{theorem}
\label{thm:ChooseKtoBoundS}
Given a query $q$ and query-dependent parameters $B$ and $\Delta$, if 
$$K = \ceil[\Big]{2\frac{\log B}{\log \frac{1}{\Delta}}}$$
then
$$ p(x_v,q)^K \geq \sum_{i=v+1}^{N} p(x_i,q)^K$$
and we have the bounds 
$$ |\mathbf{s}(q)|_1 \leq v+1$$
$$ |\tilde{\mathbf{s}}(q)|_1 \leq v+1$$
\end{theorem}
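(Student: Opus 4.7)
The plan is to reduce everything to two algebraic facts: the elementary inequality $\bigl(\sum_i \sqrt{a_i}\bigr)^2 \ge \sum_i a_i$ for non-negative $a_i$ (which is obtained by expanding the square and dropping cross terms), and the monotonicity that $p_i \le p_{v+1} \le 1$ for $i \ge v+1$. The factor of $2$ in the choice of $K$ is there precisely to absorb the squaring that appears when we pass from $B = \sum_{i \ge v+1}\sqrt{p_i^K/p_{v+1}^K}$ to a bound on $\sum_{i \ge v+1} p_i^K/p_{v+1}^K$.

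First I would prove the tail inequality $p_v^K \ge \sum_{i=v+1}^N p_i^K$. Applying the elementary inequality with $a_i = p_i^K/p_{v+1}^K$ gives
\begin{equation}
B^2 \;=\; \Bigl(\sum_{i=v+1}^N \sqrt{p_i^K/p_{v+1}^K}\Bigr)^2 \;\ge\; \sum_{i=v+1}^N \frac{p_i^K}{p_{v+1}^K},
\end{equation}
so $\sum_{i=v+1}^N p_i^K \le B^2\, p_{v+1}^K$. The choice $K = \lceil 2\log B/\log(1/\Delta)\rceil$ gives $K\log(1/\Delta) \ge 2\log B$, i.e. $(1/\Delta)^K \ge B^2$. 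Since $\Delta = p_{v+1}/p_v$, this rearranges to $p_v^K \ge B^2\, p_{v+1}^K \ge \sum_{i=v+1}^N p_i^K$, which is the desired tail bound.

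Next I would derive both norm bounds from this single inequality. For $|\mathbf{s}(q)|_1$, split the sum at index $v$: the first $v$ terms contribute at most $v$ because each $p_i^K \in [0,1]$, and the tail contributes at most $p_v^K \le 1$ by the step above, giving $|\mathbf{s}(q)|_1 \le v+1$. For $|\tilde{\mathbf{s}}(q)|_1$, invoke Lemma~\ref{lem:vBBounds} to write $|\tilde{\mathbf{s}}(q)|_1 \le v + B\sqrt{p_{v+1}^K}$; it then suffices to show $B\sqrt{p_{v+1}^K} \le 1$, which is equivalent to $B^2 p_{v+1}^K \le 1$, and this follows from $B^2 p_{v+1}^K \le p_v^K \le 1$ (the first inequality is exactly what the choice of $K$ gives us, and the second uses $p_v \in [0,1]$).

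The only real subtlety is recognizing why the exponent carries a factor of $2$ rather than $1$: if one only asked that $(1/\Delta)^K \ge B$, one could bound the tail by $B\, p_{v+1}^K$ (using $p_i^K \le \sqrt{p_i^K\, p_{v+1}^K}$ termwise), but then the $|\tilde{\mathbf{s}}|_1$ bound would require $\sqrt{B\, p_{v+1}^K} \le 1$, which need not hold. Squaring through and asking for the stronger $(1/\Delta)^K \ge B^2$ lets the same constant $B^2 p_{v+1}^K \le 1$ serve both norm bounds simultaneously. Everything else is bookkeeping with monotone functions of probabilities in $[0,1]$ and should be routine.
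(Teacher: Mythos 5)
Your proof is correct and follows essentially the same route as the paper: the same reduction of the choice of $K$ to the condition $\Delta^{-K/2}\geq B$, the same split of each sum at index $v$, and the paper's own Lemma~\ref{lem:vBBounds} for the $|\tilde{\mathbf{s}}|_1$ bound. The only cosmetic difference is that you obtain the un-square-rooted tail bound via $\bigl(\sum_i\sqrt{a_i}\bigr)^2\geq\sum_i a_i$, while the paper gets it from the termwise domination $p_i^K\leq\sqrt{p_i^K}$ (which needs only the weaker condition $\Delta^{-K}\geq B$); both are valid, and your version has the small virtue of deriving all three claims from the single inequality $p_v^K\geq B^2 p_{v+1}^K$.
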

\begin{proof}
Start with the inequality 
$$ \sqrt{p_v^K} \geq \sum_{i = v+1}^N \sqrt{p_i^K}$$
Now divide both sides by $\sqrt{p_{v+1}^K}$. 
$$ \left(\sqrt{\frac{p_v}{p_{v+1}}}\right)^K \geq \sum_{i = v+1}^N \frac{\sqrt{p_i^K}}{\sqrt{p_{v+1}^K}}$$
Observe that the left side is equal to $\Delta^{-K/2}$ and the right side to $B$. Thus we have
$$ \Delta^{-K/2} \geq B$$
We use the smallest integer $K$ that satisfies this inequality
$$ K \geq 2 \frac{\log B}{\log \frac{1}{\Delta}}$$
To bound the L1 norms, observe that $p_v \leq 1$ and that the summation $\leq p_v$. To get the final inequality in the theorem, start with the inequality in terms of $p_i$ rather than $\sqrt{p_i}$ and follow the same steps. The result will be $K \geq \log B / -\log\Delta $. Our choice of $K$ also satisfies this inequality (the $|\tilde{\mathbf{s}}|_1$ bound is more restrictive). 
\end{proof}

\paragraph{Equidistant Assumption:}
If we wanted to make the bound in Lemma~\ref{lem:vBBounds} or $K$ in Theorem~\ref{thm:ChooseKtoBoundS} as large as possible, we would set $B = N-v$. To have $B = N-v$, we need $p_{v+1} = p_{v+2} = ... = p_N$. Since the collision probability is a monotone function of distance, this condition means that all non-neighbors are equidistant from the query. The rationale behind our equidistant assumption is that it represents the worst possible $\Delta$-stable query. We are also motivated by~\cite{beyer1999nearest}, who also identify the equidistant case as a particularly hard instance of the near-neighbor problem. When $B = N$, the vector is minimally sparse and we rely on $K$ to do all of the work. Theorem~\ref{thm:ChooseKtoBoundS} works for any distribution of points, so we could repeat the analysis with $B < O(N)$ under stronger sparsity assumptions. However, our sketch is sublinear for stable queries even under the equidistant assumption. In the next section, we will see that the memory required by our sketch depends on $p_v$ and $\Delta$. 


\subsection{Reduce Near-Neighbor to Compressed Recovery}
\label{sec:reduceNNtoCSforCMS}


In this section, we will combine all of our results to create a near-neighbor sketch under the equidistant assumption. For simplicity, we restrict our attention to the CMS. The main challenge is to ensure that the kernel values recovered by our algorithm are within $\epsilon$ of the true ones. 


There are two sources of error: the CMS recovery and the RACE estimator. We will use $\epsilon_C$ for the CMS error and $\epsilon_E$ for the estimator error. The value of $\epsilon_C$ is determined by the CMS recovery guarantee while $\epsilon_E$ is determined by Theorem~\ref{thm:AllACEErrorBound}. We will use $M = O\left(\frac{1}{\epsilon_C} \log{\frac{N}{\delta}}\right)$ measurements for the CMS. Each measurement can differ from the true value by up to $\epsilon_E$. This situation is known as \textit{measurement noise}. For the CMS, measurement noise propagates as-is to the recovered output values. This happens because the CMS recovery procedure returns one of the cell values as its estimate for each component of $\hat{\mathbf{s}}$. If the cell values in $\widehat{\CMS}$ deviate from the true CMS values by $\leq \epsilon_E$, then the output of $\widehat{\CMS}$ deviates from the true output by $\leq \epsilon_E$. 
\begin{equation}
    s_i(q) - \epsilon_E \leq \hat{s}_i(q) \leq s_i(q) + \epsilon_E + \epsilon_C |\mathbf{s}(q)|_1 
\end{equation}
By choosing appropriate values for $\epsilon_C$ and $\epsilon_E$, we obtain a concise statement for the pointwise recovery guarantee of our estimated CMS. 



\begin{theorem}
\label{thm:PointwiseReconstructionBoundCMS}
We require 
$$ O\left(\frac{|\tilde{\mathbf{s}}(q)|_1^2 |\mathbf{s}(q)|_1}{\epsilon^3} \log\left(\frac{|\mathbf{s}(q)|_1}{\epsilon \delta}\log\left(\frac{N}{\delta}\right)\right)\log\left(\frac{N}{\delta}\right)\right)$$
ACE estimates to recover $\hat{\mathbf{s}}(q)$ such that 
\begin{equation}
s_i(q) - \frac{\epsilon}{2} \leq \hat{s_i}(q) \leq s_i(q) + \frac{\epsilon}{2}
\end{equation}
with probability $1-\delta$. 
\end{theorem}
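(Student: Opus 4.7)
}
The plan is to split the total error budget $\epsilon/2$ between the two independent error sources, choose the CMS width and the number of RACE repetitions accordingly, and combine the guarantees with a union bound. Concretely, I would allocate $\epsilon_E = \epsilon/4$ to the per-measurement RACE estimation error and allocate $\epsilon_C = \epsilon/(4|\mathbf{s}(q)|_1)$ to the CMS recovery error, so that the propagated error $\epsilon_E + \epsilon_C |\mathbf{s}(q)|_1$ from the pointwise bound displayed just above the theorem is at most $\epsilon/2$. For the CMS to achieve additive slack $\epsilon_C$ on the nonnegative vector $\mathbf{s}(q)$ with probability $1 - \delta/2$, Theorem~\ref{thm:CMSPointwiseGuarantee} asks for $M = O\bigl(\tfrac{|\mathbf{s}(q)|_1}{\epsilon}\log(N/\delta)\bigr)$ measurements.

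Next I would justify why the RACE estimation error really propagates additively rather than multiplicatively through the CMS decoder. This is the key structural observation: the CMS estimate of each coordinate is a minimum of entries of $\widehat{\CMS}$, and if every cell value is perturbed from its exact counterpart by at most $\epsilon_E$, then the minimum shifts by at most $\epsilon_E$ as well. Combined with the standard CMS guarantee, this yields the pointwise bound
\begin{equation}
s_i(q) - \epsilon_E \;\le\; \hat{s}_i(q) \;\le\; s_i(q) + \epsilon_E + \epsilon_C |\mathbf{s}(q)|_1,
\end{equation}
which is already stated in the excerpt and feeds directly into the chosen allocation.

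To make each of the $M$ RACE-estimated measurements $\epsilon_E$-accurate simultaneously with probability $1-\delta/2$, I would invoke Theorem~\ref{thm:AllACEErrorBound} with its built-in union bound, giving
\begin{equation}
O\!\left(M \cdot \frac{|\tilde{\mathbf{s}}(q)|_1^2}{\epsilon_E^2} \log\!\left(\frac{M}{\delta}\right)\right)
\end{equation}
ACE estimates in total. Substituting $\epsilon_E = \epsilon/4$ and $M = O\bigl(\tfrac{|\mathbf{s}(q)|_1}{\epsilon}\log(N/\delta)\bigr)$ and collecting the logarithmic factors yields exactly the advertised complexity
\[
O\!\left(\frac{|\tilde{\mathbf{s}}(q)|_1^2 |\mathbf{s}(q)|_1}{\epsilon^3} \log\!\left(\frac{|\mathbf{s}(q)|_1}{\epsilon \delta}\log\!\left(\frac{N}{\delta}\right)\right)\log\!\left(\frac{N}{\delta}\right)\right).
\]
Finally, a union bound over the two failure events (CMS recovery failing and any of the $M$ RACE estimates deviating by more than $\epsilon_E$) gives overall success probability at least $1-\delta$.

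The main obstacle I anticipate is the propagation argument: one must verify that, when the CMS decoder sees perturbed cell values $\widehat{\CMS}$ instead of the exact sketch $\mathrm{CMS}$, the $\epsilon_E$ per-cell perturbation truly translates into only an $\epsilon_E$ shift in each recovered coordinate, independent of the signal magnitude. Everything else is bookkeeping --- balancing $\epsilon_C$ against $\epsilon_E$ and balancing the two $\delta/2$ budgets --- but this structural claim is what makes the decomposition of errors additive rather than forcing a multiplicative blow-up that would destroy the sublinear bound.
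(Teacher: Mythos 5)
Your proposal matches the paper's proof essentially step for step: the same split $\epsilon_E = \epsilon/4$, $\epsilon_C = \epsilon/(4|\mathbf{s}(q)|_1)$, the same $M = O\bigl(\tfrac{|\mathbf{s}(q)|_1}{\epsilon}\log(N/\delta)\bigr)$ from the CMS guarantee, the same invocation of Theorem~\ref{thm:AllACEErrorBound}, and the same $\delta/2$--$\delta/2$ union bound, with the additive error-propagation observation (the decoder returns a minimum of cell values, so a per-cell perturbation of $\epsilon_E$ shifts each recovered coordinate by at most $\epsilon_E$) being exactly the fact the paper states just above the theorem. The proposal is correct and takes the same route as the paper.
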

\begin{proof}

Put $\epsilon_E = \epsilon / 4$ and $\epsilon_C = \epsilon /4 |\mathbf{s}(q)|_1$. Then the error is 
$$\mathbf{s}_i(q) - \frac{\epsilon}{4} \leq \hat{\mathbf{s}_i}(q) \leq \mathbf{s}_i(q) + \frac{\epsilon}{2} $$

To have $\epsilon_C =\epsilon /4 |\mathbf{s}(q)|_1$ with probability $1 - \delta_C$ we must have 
$$ M = O\left(\frac{|\mathbf{s}|_1}{\epsilon}\log\left(\frac{N}{\delta_C}\right)\right)$$
CMS measurements. For all measurements to have $\epsilon_E = \epsilon / 4$ with probability $1 - \delta_E$, we must have 
$$O\Big(M \frac{|\tilde{\mathbf{s}}(q)|^2_1}{\epsilon^2}\log\Big(\frac{M}{\delta_E}\Big)\Big)$$
ACE repetitions. The first requirement comes from the CMS guarantee. The second comes from Theorem~\ref{thm:AllACEErrorBound}.For both of these conditions to hold with probability $1 - \delta$, we use the union bound and put $\delta_C = \delta_E = \delta / 2$. To obtain the result, substitute $M$ into the second requirement. We can safely ignore the constant factors inside the logarithm because they are constant additive terms.
\end{proof}


\subsection{Near-Neighbor Sketch Size}
To differentiate between the $v$ and the $(v+1)^{\text{th}}$ elements of $\mathbf{s}$, we need to have $\epsilon < s_{v} - s_{v+1}$. This means that we can identify the $v$ nearest neighbors by setting $\epsilon < p_v^K - p_{v+1}^K$. 

\begin{lemma}
\label{lem:pv_asympt}
Put $K = \ceil{2\frac{\log N}{\log \frac{1}{\Delta}}}$. Then 
\begin{equation}
    \epsilon = p_v^K - p_{v+1}^K = O\left(N^{2\frac{\log{p_v}}{\log{\frac{1}{\Delta}}}}\right)
\end{equation}
\end{lemma}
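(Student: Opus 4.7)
The plan is to work directly with the algebraic form of $p_v^K - p_{v+1}^K$, exploiting the factorization induced by the definition $\Delta = p_{v+1}/p_v$, and then to evaluate what the prescribed choice of $K$ does to each factor.

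First I would substitute $p_{v+1} = \Delta\, p_v$ to factor
\[
p_v^K - p_{v+1}^K \;=\; p_v^K\bigl(1 - \Delta^K\bigr).
\]
This reduces the problem to controlling the two factors $p_v^K$ and $1-\Delta^K$ separately.

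Next I would bound the $\Delta^K$ factor. With $K \geq 2\log N / \log(1/\Delta)$, a direct computation gives
\[
\Delta^K \;\leq\; \Delta^{\,2\log N / \log(1/\Delta)} \;=\; e^{-2\log N} \;=\; N^{-2},
\]
so $1-\Delta^K \leq 1$ (and in fact tends to $1$), which means this factor contributes only a constant in the big-$O$ statement and can be absorbed.

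The remaining work is to control $p_v^K$. Since $p_v \in (0,1)$, raising to a larger exponent only decreases the value, so the ceiling in $K = \lceil 2\log N/\log(1/\Delta)\rceil$ is harmless:
\[
p_v^K \;\leq\; p_v^{\,2\log N/\log(1/\Delta)} \;=\; \exp\!\Bigl(\tfrac{2\log N}{\log(1/\Delta)}\log p_v\Bigr) \;=\; N^{\,2\log p_v/\log(1/\Delta)}.
\]
Combining the two bounds gives $\epsilon = p_v^K(1-\Delta^K) = O\!\bigl(N^{\,2\log p_v/\log(1/\Delta)}\bigr)$, as claimed. The exponent is negative because $p_v<1$, which is consistent with $\epsilon$ shrinking in $N$.

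The steps are essentially routine manipulation of exponents, so I do not expect a genuine obstacle; the only care needed is (i) verifying that the ceiling on $K$ strengthens rather than weakens each inequality, since $p_v\le 1$ and $\Delta<1$ both flip monotonicity, and (ii) confirming that the $(1-\Delta^K)$ factor is benign rather than something that could erase the dominant term (which $\Delta^K \leq N^{-2}$ rules out cleanly).
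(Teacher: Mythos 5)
Your proof is correct and follows essentially the same route as the paper: factor out $p_v^K(1-\Delta^K)$, show $\Delta^K = N^{-2}$, and convert $p_v^K$ to a power of $N$ via the standard exponent identity. The only (harmless) difference is that you make the direction of the ceiling inequality explicit, which the paper glosses over.
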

\begin{proof}
$$p_v^K - p_{v+1}^K = p_v^K(1 - \Delta^K)$$
Substitute $K$:
$$ p_v^{2\frac{\log N}{\log \frac{1}{\Delta}}}(1 - \Delta^{2\frac{\log N}{\log \frac{1}{\Delta}}})$$
Recall the identity
$$ x^{\log y / \log x} = y $$
First address the $\Delta^K$ term. Observe that 
$$ \Delta^{2\frac{\log N}{\log \frac{1}{\Delta}}} = \left(\Delta^{\frac{\log N}{\log \Delta}}\right)^{-2} = N^{-2}$$
Next address the $p_v^K$ term. Observe that 
$$ p_v^{2\frac{\log N}{\log \frac{1}{\Delta}}} = \left(p_v^{\frac{\log N}{\log p_v}}\right)^{\frac{\log p_v}{\log \frac{1}{\Delta}}} = N^{2\frac{\log p_v}{\log \frac{1}{\Delta}}}$$
Put these together: 
$$p_v^K - p_{v+1}^K = N^{2\frac{\log p_v}{\log \frac{1}{\Delta}}}(1 - N^{-2}) $$
Since $(1 - N^{-2}) \to 1$ with $N$, we have that 
$$p_v^K - p_{v+1}^K = O\left(N^{2\frac{\log p_v}{\log \frac{1}{\Delta}}}\right) $$
This result may seem strange, but remember that $p_v < 1$. Therefore, $\epsilon = p_v^K - p_{v+1}^K$ is a negative power of $N$. Also, we restrict $\Delta$ to the range where $2\frac{\log{N}}{\log \frac{1}{\Delta}} > 1$. Otherwise, $K = 1$ and the lemma is unnecessary. 
\end{proof}



We are finally ready to state our main result. We assume the equidistant case and put $K = \ceil{2\frac{\log N}{\log \frac{1}{\Delta}}}$ according to Theorem~\ref{thm:ChooseKtoBoundS} and we plug the result into Theorem~\ref{thm:PointwiseReconstructionBoundCMS}. 

\begin{theorem}
\label{thm:RACEforVNN_CMS}
Given a query $q$, a dataset $\mathcal{D}$ and an LSH function that can output $r$ different values, we can construct a sketch to solve the $v$-nearest neighbor problem with probability $1-\delta$ in size
$$ O\left(v^3 N^{b_1} \log\left(\frac{v}{\delta} N^{b_2} \log{\frac{N}{\delta}}\right)\log\left({\frac{N}{\delta}}\right)\log{N}\right)$$
bits, where 
$$ b_1 = \frac{6 |\log {p_v}| + 2\log r}{\log{\frac{1}{\Delta}}}$$
$$ b_2 = \frac{2 |\log{ p_v}|}{\log{\frac{1}{\Delta}}}$$
$x_v$ is the $v^{\text{th}}$ nearest neighbor of $q$ in $\mathcal{D}$, $x_{v+1}$ is the $(v+1)^{\text{th}}$ nearest neighbor of $q$ in $\mathcal{D}$, and $\Delta = \frac{p_{v+1}}{p_v}$. 
\end{theorem}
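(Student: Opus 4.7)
My plan is to glue together the three preceding results: the choice of $K$ from Theorem~\ref{thm:ChooseKtoBoundS}, the asymptotic gap from Lemma~\ref{lem:pv_asympt}, and the recovery bound from Theorem~\ref{thm:PointwiseReconstructionBoundCMS}. The basic observation is that to pick out the $v$ nearest neighbors from $\hat{\mathbf{s}}(q)$, it suffices to make the pointwise recovery tolerance strictly smaller than the gap $p_v^K - p_{v+1}^K$; once we do that, ranking by $\hat{s}_i$ returns the correct top-$v$ indices.

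First, I would invoke the equidistant assumption, so that $B = O(N)$ and Theorem~\ref{thm:ChooseKtoBoundS} forces me to take
\[
K = \left\lceil 2\,\frac{\log N}{\log \tfrac{1}{\Delta}}\right\rceil,
\]
which simultaneously yields $|\mathbf{s}(q)|_1 \le v+1$ and $|\tilde{\mathbf{s}}(q)|_1 \le v+1$. Next I would set $\epsilon := p_v^K - p_{v+1}^K$, so that the guarantee $|\hat{s}_i - s_i| \le \epsilon/2$ of Theorem~\ref{thm:PointwiseReconstructionBoundCMS} separates the $v^{\text{th}}$ from the $(v+1)^{\text{th}}$ coordinate. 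Lemma~\ref{lem:pv_asympt} then gives $\epsilon = \Theta\bigl(N^{-b_2}\bigr)$ where $b_2 = 2|\log p_v|/\log(1/\Delta)$, so $1/\epsilon = O(N^{b_2})$ and $1/\epsilon^3 = O(N^{3 b_2}) = O(N^{6|\log p_v|/\log(1/\Delta)})$.

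Plugging $|\mathbf{s}(q)|_1, |\tilde{\mathbf{s}}(q)|_1 \le v+1$ and this value of $\epsilon$ into Theorem~\ref{thm:PointwiseReconstructionBoundCMS}, the number of ACE repetitions needed is
\[
O\!\left(v^3\, N^{\,6|\log p_v|/\log(1/\Delta)}\,\log\!\Bigl(\tfrac{v\,N^{b_2}}{\delta}\log\tfrac{N}{\delta}\Bigr)\log\tfrac{N}{\delta}\right).
\]
Each ACE array has $r^K$ cells, each storing an $O(\log N)$-bit counter. Using the identity $r^{\log N /\log(1/\Delta)} = N^{\log r/\log(1/\Delta)}$, the per-ACE cost is $O(N^{\,2\log r/\log(1/\Delta)}\log N)$ bits. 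Multiplying gives total bits
\[
O\!\left(v^3\, N^{b_1}\,\log\!\Bigl(\tfrac{v}{\delta}N^{b_2}\log\tfrac{N}{\delta}\Bigr)\log\tfrac{N}{\delta}\,\log N\right),
\]
with $b_1 = 6|\log p_v|/\log(1/\Delta) + 2\log r/\log(1/\Delta)$, which is the claimed bound.

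The main obstacle I anticipate is purely bookkeeping: correctly fusing the two sources of $N$-dependence (the $\epsilon^{-3}$ factor from sparse recovery and the $r^K$ factor from the LSH codebook) into a single exponent $b_1$, while making sure the logarithmic terms from the median-of-means union bound and the CMS union bound both survive at the claimed order. I would also verify that the $O(N)$-style $B$ value used in the equidistant step is tight enough that replacing $\log B$ by $\log N$ in $K$ only loses constants, so the plug-in into Lemma~\ref{lem:pv_asympt} goes through verbatim. Every other ingredient is a direct quotation of an earlier theorem.
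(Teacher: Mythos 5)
Your proposal is correct and follows essentially the same route as the paper's own proof: assume the equidistant case, fix $K = \lceil 2\log N/\log(1/\Delta)\rceil$ so Theorem~\ref{thm:ChooseKtoBoundS} bounds both norms by $v+1$, set $\epsilon = p_v^K - p_{v+1}^K$ and apply Lemma~\ref{lem:pv_asympt} to get $\epsilon^{-1} = O(N^{b_2})$, substitute into Theorem~\ref{thm:PointwiseReconstructionBoundCMS}, and multiply by the $r^K\log N = N^{2\log r/\log(1/\Delta)}\log N$ per-ACE cost to assemble $b_1$. The bookkeeping concerns you flag (fusing the two exponents, $\log B$ versus $\log N$) are handled in the paper exactly as you anticipate.
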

\begin{proof}
Assume the equidistant case and put $K = \ceil{2\frac{\log N}{\log \frac{1}{\Delta}}}$. Then Theorem~\ref{thm:PointwiseReconstructionBoundCMS} states that we require 
$$ O\left(\frac{(v+1)^3}{\epsilon^3} \log\left(\frac{v+1}{\epsilon \delta}\log\left(\frac{N}{\delta}\right)\right)\log\left(\frac{N}{\delta}\right)\right)$$
ACE repetitions. The $(v+1)^3$ terms came from the bounds in Theorem~\ref{thm:ChooseKtoBoundS}. Put $\epsilon = p_v^K - p_{v+1}^K$ and use Lemma~\ref{lem:pv_asympt} to get that $\epsilon^{-1} = N^{b_2}$. The requirement is now 
$$ O\left(v^3 N^{3b_2} \log\left(N^{b_2}\frac{v}{\delta}\log\left(\frac{N}{\delta}\right)\right)\log\left(\frac{N}{\delta}\right)\right)$$
ACE repetitions. Each ACE repetition requires $r^K \log N$ bits. Apply the same trick as in Lemma~\ref{lem:pv_asympt} to get that 
$$r^K = r^{2\frac{\log N}{\log\frac{1}{\Delta}}} = N^{2\frac{\log r}{\log \frac{1}{\Delta}}}$$
The total requirement is therefore 
$$ O\left(v^3 N^{b_1} \log\left(N^{b_2}\frac{v}{\delta}\log\left(\frac{N}{\delta}\right)\right)\log\left(\frac{N}{\delta}\right)\log (N)\right)$$
bits. 
\end{proof}

Our main theorem from the main text (Corollary~\ref{cor:RACEforVNN_CMS}) is a substantially simplified version of Theorem~\ref{thm:RACEforVNN_CMS}. 
\begin{corollary}
\label{cor:RACEforVNN_CMS}
It is possible to construct a sketch that solves the exact $v$-nearest neighbor problem with probability $1-\delta$ using $ O\left(N^b \log^3\left(N\right)\right)$
bits, where 
$$ b = \frac{6 |\log {p_v}| + 2\log r}{\log{\frac{1}{\Delta}}}$$
Here, $r$ is the range of the LSH function, and $p_v$ is the collision probability of the $v^{\text{th}}$ nearest neighbor with the query.
\end{corollary}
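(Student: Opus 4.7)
The plan is to derive the corollary as an immediate simplification of Theorem~\ref{thm:RACEforVNN_CMS}, which already supplies all the quantitative content. Since Theorem~\ref{thm:RACEforVNN_CMS} itself is obtained by combining Theorem~\ref{thm:ChooseKtoBoundS} (choice of $K$ under the equidistant assumption), Theorem~\ref{thm:PointwiseReconstructionBoundCMS} (CMS-plus-RACE recovery bound), and Lemma~\ref{lem:pv_asympt} (asymptotics of $\epsilon = p_v^K - p_{v+1}^K$), there is nothing new to prove: the task is purely to absorb lower-order factors into the $O(\cdot)$ notation and into $\log^3 N$.

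Concretely, I would proceed as follows. First, invoke Theorem~\ref{thm:RACEforVNN_CMS} to get the bound
\[
O\!\left(v^3 N^{b_1} \log\!\left(\tfrac{v}{\delta} N^{b_2} \log\tfrac{N}{\delta}\right)\log\!\left(\tfrac{N}{\delta}\right)\log N\right),
\]
with $b_1$ and $b_2$ as in that theorem. Next, treat $v$ and $\delta$ as constants independent of $N$ (they are parameters of the query/problem instance, not the stream length), so that the $v^3$ prefactor is absorbed into the $O(\cdot)$ and $\log(N/\delta) = O(\log N)$. Third, simplify the nested logarithm: since $\log(N^{b_2}) = b_2 \log N$ and $\log\log(N/\delta) = O(\log\log N) = O(\log N)$, the expression inside the outermost log is polynomial in $N$, so the whole factor collapses to $O(\log N)$. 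The product of the three remaining logarithmic factors is therefore $O(\log^3 N)$, and identifying $b := b_1$ yields the stated bound $O(N^b \log^3 N)$.

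The main ``obstacle,'' which is really just a bookkeeping step, is verifying that the nested logarithm really is $O(\log N)$ rather than, say, $O(\log^2 N)$; this hinges on the fact that $b_2$ is a constant with respect to $N$ once $p_v$ and $\Delta$ are fixed by the query, so $\log(N^{b_2}) = b_2 \log N$ is genuinely a single log. It is also worth noting explicitly that the equidistant assumption is what allows the substitution $K = \lceil 2 \log N / \log(1/\Delta)\rceil$ used inside Theorem~\ref{thm:RACEforVNN_CMS}; under strictly sparser datasets one could replace $N$ by $B$ and get a tighter $b$, but the stated corollary uses the worst-case $B = O(N)$. With these observations in place, Corollary~\ref{cor:RACEforVNN_CMS} follows directly.
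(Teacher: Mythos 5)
Your proposal is correct and follows essentially the same route as the paper's own proof: both start from Theorem~\ref{thm:RACEforVNN_CMS}, observe that the nested logarithm collapses to $O(\log N)$ because $\log(N^{b_2}) = b_2\log N$ with $b_2$ constant in $N$, absorb $v$ and $\delta$ as query-level constants, and identify $b$ with the dominant exponent $b_1$. Your version is if anything slightly cleaner, since the paper's written proof transposes $b_1$ and $b_2$ in an intermediate line (a typo that does not affect the conclusion).
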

\begin{proof}
The proof involves expanding Theorem~\ref{thm:RACEforVNN_CMS} and dropping terms. Observe that 
$$ \log\left(N^{b_1}\frac{v}{\delta}\log\left(\frac{N}{\delta}\right)\right)$$
$$ = b_1 \log N + \log v - \log \delta + \log \left(\log N - \log \delta \right)$$
This is multiplied by $v^3 N^{b_2}$, $(\log N - \log \delta)$ and $\log N$. The $N^{b_2}$ term asymptotically dominates the expression. We are left with 
$$ O\left(v^3 N^b \log^3 N \right)$$
\end{proof}

\section{Analysis Assuming Sparsity}
Here, we present results when we assume sparsity rather than near-neighbor stability. Suppose that with $K = 1$, we have $|\tilde{\mathbf{s}}(q)|_1 \leq C$ where $C$ is a query-dependent bound. In this case, we can dispense with Section~\ref{sec:sparsityconditions} as we no longer need to choose $K$ so that we get a bound for $|\mathbf{s}(q)|_1$. This greatly simplifies the analysis. In particular, we can directly apply Theorem~\ref{thm:PointwiseReconstructionBoundCMS} with the new bound for $|\tilde{\mathbf{s}}(q)|_1$.

\begin{corollary}
Given a query $q$ with $|\tilde{\mathbf{s}}(q)|_1 \leq C$, a dataset $\mathcal{D}$ and an LSH function that can output $r$ different values, we can construct a sketch to solve the $v$-nearest neighbor problem with probability $1 - \delta$ in size
$$ O\left(\frac{rC^3}{\epsilon^3} \log\left(\frac{C}{\epsilon \delta}\log\left(\frac{N}{\delta}\right)\right)\log\left(\frac{N}{\delta}\right)\log(N)\right)$$
bits, where $\epsilon = p_v - p_{v+1}$. 
\end{corollary}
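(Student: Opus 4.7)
My plan is to specialize Theorem~\ref{thm:PointwiseReconstructionBoundCMS} directly, since the sparsity assumption $|\tilde{\mathbf{s}}(q)|_1 \leq C$ at $K = 1$ lets us skip the machinery of Section~\ref{sec:sparsityconditions} entirely. There is no need to choose $K$ to force sparsity, so no exponential-in-$K$ blow-up of the LSH range appears, and the dependence on $N$ enters only through logarithmic failure-probability factors.

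First I would bound both $|\mathbf{s}(q)|_1$ and $|\tilde{\mathbf{s}}(q)|_1$ by $C$. The tilde bound is the hypothesis, and since $s_i \in [0,1]$ we have $s_i \leq \sqrt{s_i}$, which gives $|\mathbf{s}(q)|_1 \leq |\tilde{\mathbf{s}}(q)|_1$ (exactly the first inequality in Lemma~\ref{lem:vBBounds}). Substituting $|\mathbf{s}(q)|_1, |\tilde{\mathbf{s}}(q)|_1 \leq C$ into Theorem~\ref{thm:PointwiseReconstructionBoundCMS} immediately yields
$$ O\left(\frac{C^3}{\epsilon^3}\log\left(\frac{C}{\epsilon\delta}\log\frac{N}{\delta}\right)\log\frac{N}{\delta}\right) $$
ACE repetitions as sufficient to recover $\hat{\mathbf{s}}(q)$ pointwise within $\epsilon/2$ of $\mathbf{s}(q)$ with probability $1-\delta$.

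Next I would account for the memory cost of a single ACE repetition. Because $K=1$, the composite hash $L$ reduces to one LSH function taking values in a range of size $r$, so each ACE array has $r$ cells. Each cell is a counter upper-bounded by $N$, and therefore uses $O(\log N)$ bits. Multiplying the number of repetitions by this $r \log N$ per-repetition cost gives the claimed sketch size in bits.

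The final step is the reduction to the $v$-nearest neighbor problem: setting $\epsilon = p_v - p_{v+1}$, a pointwise $\epsilon/2$ estimate of $\mathbf{s}(q)$ strictly separates the top-$v$ entries from the tail, so returning the top-$v$ indices of $\hat{\mathbf{s}}(q)$ as in Algorithm~\ref{alg:query} correctly identifies the $v$ nearest neighbors with probability $1-\delta$. There is no substantive obstacle, only careful bookkeeping; the conceptual point is that assuming $|\tilde{\mathbf{s}}(q)|_1 \leq C$ \emph{directly}, rather than deriving it from the equidistant worst case through $B$ and $\Delta$, eliminates the $N^{b_1}$ and $N^{b_2}$ factors of Theorem~\ref{thm:RACEforVNN_CMS}, leaving a sketch whose $N$-dependence is only polylogarithmic.
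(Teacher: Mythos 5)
Your proposal is correct and follows essentially the same route as the paper: substitute $|\mathbf{s}(q)|_1 \leq |\tilde{\mathbf{s}}(q)|_1 \leq C$ into Theorem~\ref{thm:PointwiseReconstructionBoundCMS}, set $\epsilon = p_v - p_{v+1}$, and multiply the repetition count by the $r\log N$ cost of a single ACE array at $K=1$. You actually supply slightly more justification than the paper does (the $s_i \leq \sqrt{s_i}$ step and the per-cell bit count), but the argument is the same.
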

\begin{proof}

Substitute the bound $|\mathbf{s}(q)|_1 \leq |\tilde{\mathbf{s}}(q)|_1 \leq C$ into Theorem~\ref{thm:PointwiseReconstructionBoundCMS}. We require 

$$ O\left(\frac{C^3}{\epsilon^3} \log\left(\frac{C}{\epsilon \delta}\log\left(\frac{N}{\delta}\right)\right)\log\left(\frac{N}{\delta}\right)\right)$$
ACE repetitions. To prove the corollary, put $\epsilon = p_v - p_{v+1}$ and multiply by $r\log N$ (the cost to store each ACE repetition). 

\end{proof}

\end{document}